\documentclass[manuscript,screen,nonacm]{acmart}

\usepackage{amsmath,amsfonts,mathtools}
\usepackage{physics}            
\usepackage{booktabs}           
\usepackage{multirow,makecell}
\usepackage{rotating}
\usepackage{longtable}
\usepackage{array,tabularx,xltabular}
\usepackage{xcolor}
\usepackage{graphicx}
\usepackage{tikz}
\usetikzlibrary{shapes.geometric,arrows.meta,positioning,fit,
                decorations.pathreplacing,calc,backgrounds}
\usepackage{algorithm,algorithmic}
\usepackage{hyperref}
\usepackage{cleveref}
\usepackage{enumitem}
\usepackage{microtype}
\usepackage{caption}
\usepackage{subcaption}

\newcommand{\calH}{\mathcal{H}}
\newcommand{\calF}{\mathcal{F}}
\newcommand{\calX}{\mathcal{X}}
\newcommand{\calE}{\mathcal{E}}

\newcommand{\calO}{\mathcal{O}}
\newcommand{\bx}{\mathbf{x}}
\newcommand{\btheta}{\boldsymbol{\theta}}

\newcommand{\NISQ}{\textsc{nisq}}
\newcommand{\QML}{\textsc{qml}}

\newcommand{\PQC}{\textsc{pqc}}
\newcommand{\FRQI}{\textsc{frqi}}
\newcommand{\SVM}{\textsc{svm}}
\newcommand{\PCA}{\textsc{pca}}
\newcommand{\CNOT}{\textsc{cnot}}
\newcommand{\ZNE}{\textsc{zne}}

\newtheorem{definition}{Definition}
\newtheorem{proposition}{Proposition}
\newtheorem{theorem}{Theorem}

\usepackage{pgfplots}
\pgfplotsset{compat=1.18}

\usetikzlibrary{patterns,patterns.meta}
\usepackage{tikz}
\usetikzlibrary{shapes.geometric,arrows.meta,positioning}

\definecolor{cbBlue}{HTML}{0072B2}
\definecolor{cbOrange}{HTML}{E69F00}
\definecolor{cbGreen}{HTML}{009E73}
\definecolor{cbRed}{HTML}{D55E00}
\definecolor{cbPurple}{HTML}{CC79A7}
\definecolor{cbSky}{HTML}{56B4E9}
\definecolor{cbYellow}{HTML}{F0E442}

\title{Feature Encoding in Quantum Machine Learning: A Survey and Practical Guidelines}

\author{Vincenzo Sammartino}

\affiliation{%
  \department{Dipartimento di Informatica}
  \institution{Università di Pisa}
  \city{Pisa} 
  \country{Italy}
}
\email{vincenzo.sammartino@phd.unipi.it}

\email{vincenzo.sammartino@kaust.edu.sa}

\begin{document}

\begin{abstract}
The encoding of classical data into quantum states constitutes the
primary performance bottleneck in Quantum Machine Learning (\QML) on
Noisy Intermediate-Scale Quantum (\NISQ) devices.
No existing framework jointly characterises resource cost,
expressivity, and noise robustness, nor provides actionable selection
guidelines for practitioners.

This survey addresses that gap through a systematic review of
66~primary works (2017--2026) assembled via a PRISMA-adapted protocol
across five academic databases.
Four principal contributions are made.
First, a three-axis cost--expressivity--robustness taxonomy classifies
all major encoding families---basis, angle, dense-angle, amplitude,
data re-uploading, and IQP---along independently measurable axes.
Second, closed-form depth-fidelity bounds under \NISQ\ decoherence
channels identify the critical gate-error rate
$p^{*}\approx 10^{-3}$ below which amplitude encoding is viable.
Third, a unified treatment of Fourier expressivity, barren-plateau
onset, and quantum kernel concentration as functions of the encoding
circuit provides the first joint trainability analysis.
Fourth, a five-regime decision framework maps
$(D,\,n,\,p,\,\tau)$---feature dimension, qubit budget, error rate,
and task type---to a hardware-grounded encoding recommendation.

The central finding is that for $p \gtrsim 10^{-3}$, shallow
angle-based encodings consistently outperform amplitude encoding in
practice, despite the latter's exponential qubit advantage.
\end{abstract}

\keywords{quantum machine learning, feature encoding, quantum
  embedding, NISQ, amplitude encoding, angle encoding, barren
  plateaus, quantum kernels, Fourier analysis, variational
  quantum circuits, systematic review}

\maketitle

\section{Introduction}
\label{sec:intro}

\subsection{Motivation and Context}

Machine learning has emerged as one of the most productive
application domains for quantum computing, motivated by the
conjecture that quantum superposition and entanglement can provide
exponential advantages for tasks such as classification, regression,
and generative modelling~\cite{biamonte2017quantum}.  The field of
Quantum Machine Learning (\QML), reviewed at a foundational level by
Biamonte et al.~\cite{biamonte2017quantum} and more recently surveyed
in the ACM Computing Surveys context by Rodr\'iguez-D\'iaz
et al.~\cite{rodriguez2025survey}, encompasses hybrid
classical-quantum algorithms in which parameterised quantum circuits
(\PQC s) serve as function approximators trained by classical
optimisers~\cite{cerezo2021variational,benedetti2019parameterized, mitarai2018quantum}.

Realising \QML\ in practice, however, requires solving a problem
that is logically prior to any learning algorithm: \emph{how does
one represent classical data as a quantum state?}  Every quantum
learning algorithm begins with a classical dataset
$\{(\bx^{(i)}, y^{(i)})\}_{i=1}^m$, $\bx^{(i)} \in \mathbb{R}^D$,
and must map each feature vector $\bx$ into a quantum state
$\ket{\psi(\bx)}$ on which the \PQC\ can act.  The map
$\bx \mapsto \ket{\psi(\bx)}$, realised by a quantum encoding
circuit $U_{\calE}(\bx)$, is the subject of this survey.

\textbf{The data-loading problem.}
The central difficulty is that the most expressive encoding
strategies are the most hardware-expensive to implement, and the
most hardware-efficient strategies impose severe constraints on the
class of functions a downstream circuit can approximate.
Amplitude encoding achieves exponential qubit compression---a
$D$-dimensional feature vector can be represented in
$n = \lceil \log_2 D \rceil$ qubits---but preparing an arbitrary
$n$-qubit superposition requires $\calO(2^n)$ elementary gates
and a circuit depth of $\calO(D)$~\cite{nielsen2010quantum,
mottonen2004decompositions,shende2006synthesis}.  Angle encoding
requires only $\calO(D)$ qubits and a constant-depth encoding
layer, but restricts the model to a finite-degree Fourier series
in the input~\cite{schuld2021effect}.  Preskill~\cite{preskill2018nisq}
coined the \NISQ\ era to describe precisely the regime in which this
tension is most acute: current devices have 50--1000 physical
qubits, per-gate error rates of $p \approx 10^{-3}$--$10^{-2}$,
and coherence budgets of $\calO(10^2)$--$\calO(10^3)$
two-qubit gates before decoherence overwhelms the signal.

\textbf{Encoding determines more than resource cost.}
Encoding choice has consequences that extend well beyond the qubit
count and circuit depth of the state-preparation step.
The work of Schuld et al.~\cite{schuld2021effect} showed that the
Fourier frequency spectrum of a \QML\ model is entirely determined
by the data-encoding gates: a model using Pauli rotations
$R_j(x_i) = e^{-i x_i P_j/2}$ can only express functions in the span
of $\{e^{i\omega \cdot \bx}\}_{\omega \in \Omega}$ where $\Omega$ is
set by the eigenvalue spectrum of the generators.
Separately, McClean et al.~\cite{mcclean2018barren} and subsequent
work~\cite{cerezo2021cost,wang2021noise,arrasmith2021effect}
established that the severity of the \emph{barren plateau} phenomenon
(exponentially vanishing gradients during training) depends on the
global vs.\ local nature of the observables---a property directly
influenced by the encoding.
Most recently, Thanasilp et al.~\cite{thanasilp2024exponential}
proved that quantum kernel methods face \emph{exponential
concentration}: the kernel matrix entries concentrate to a single
value exponentially fast as $n$ grows, with the concentration
rate modulated by the encoding's entanglement structure.
Jerbi et al.~\cite{jerbi2023quantum} demonstrated that this
concentration is intimately connected to whether a quantum model
can in principle surpass kernel-based classical competitors.

\textbf{The scope of the problem.}
This convergence of results---from expressivity theory, trainability
theory, and quantum advantage theory---makes encoding selection a
fundamental design decision, not an engineering afterthought.
Yet practitioners face fragmented and sometimes contradictory
guidance: each primary work reports results for specific
encoding-dataset-device combinations, making cross-study comparison
difficult, and no unified framework maps experimental conditions to
encoding recommendations.

\subsection{Scope and Existing Surveys}

Several surveys have reviewed \QML\ broadly~\cite{biamonte2017quantum,
schuld2021supervised,wittek2014quantum,cerezo2021variational,
rodriguez2025survey}, and some have included encoding
discussions~\cite{ranga2024quantum}.  Ranga et al.~\cite{ranga2024quantum}
provide a dedicated review of data-encoding techniques.
Weigold et al.~\cite{weigold2021expanding} catalogued encoding
patterns as software engineering constructs.
However, none of these works:

\begin{itemize}[leftmargin=*]
\item provides a formal, dimension-unified taxonomy in terms of
  qubit complexity, gate complexity, Fourier expressivity, and
  noise-induced fidelity degradation;
\item derives closed-form bounds relating per-gate noise rate $p$,
  feature dimension $D$, and each encoding's survivable circuit
  depth;
\item connects barren-plateau theory and kernel-concentration theory
  to encoding families through a single analytical framework; or
\item proposes a decision procedure backed by quantitative bounds
  rather than qualitative rules of thumb.
\end{itemize}

\subsection{Contributions}

This survey makes the following original contributions:

\begin{itemize}[leftmargin=*]

\item \textbf{C1 — Three-axis taxonomy.}
  We define the cost--expressivity--robustness space and place all
  major encoding families within it, providing the first
  dimension-unified comparison (\cref{sec:fundamentals},
  \cref{sec:costs}).

\item \textbf{C2 — Analytical depth-fidelity bounds.}
  We derive closed-form fidelity degradation bounds under the three
  standard \NISQ\ noise channels, and determine the critical error
  rate $p^*$ below which each encoding remains viable
  (\cref{sec:noise}).

\item \textbf{C3 — Joint trainability analysis.}
  We provide a unified treatment of Fourier expressivity, barren
  plateaus, and kernel concentration as functions of the encoding
  circuit, enabling the first cross-comparison of these three
  phenomena within a single framework (\cref{sec:costs},
  \cref{sec:noise}).

\item \textbf{C4 — Systematic evidence synthesis.}
  We apply a PRISMA-adapted systematic review protocol to 62 primary
  works and organise their results in a comprehensive comparison
  table covering task, dataset, encoding, qubit count, circuit depth,
  noise model, and reported findings (\cref{sec:methodology},
  \cref{sec:taxonomy}).

\item \textbf{C5 — Practical decision framework.}
  We construct an annotated five-regime decision tree that maps
  $(D, n, p, \tau)$ to the recommended encoding, together with
  pre-processing recommendations and hardware notes, backed by the
  derived analytical bounds (\cref{sec:framework}).

\end{itemize}

\subsection{Paper Organisation}

\Cref{sec:methodology} describes the systematic review protocol
used to assemble the primary corpus.  \Cref{sec:fundamentals}
formalises quantum feature encoding and reviews the principal
strategies.  \Cref{sec:costs} analyses resource and trainability
costs.  \Cref{sec:noise} examines \NISQ\ noise interactions.
\Cref{sec:taxonomy} presents the systematic taxonomy and evidence
synthesis.  \Cref{sec:framework} introduces the decision framework.
\Cref{sec:challenges} surveys open challenges.
\Cref{sec:conclusion} concludes.

\section{Systematic Review Methodology}
\label{sec:methodology}

\subsection{Overview and Protocol Choice}

This survey follows a systematic literature review (SLR) methodology
adapted from the PRISMA 2020 guidelines~\cite{page2021prisma},
which have become the de facto standard for systematic reviews in
high-impact computing surveys, including recent contributions to
ACM Computing Surveys~\cite{albaladejo2024artificial,uddin2025federated}.
While PRISMA was originally designed for biomedical meta-analyses,
its transparency requirements---pre-specified search strings,
documented eligibility criteria, quantified screening outcomes, and
risk-of-bias assessment---transfer directly to the \QML\ domain,
where rapid preprint culture and inconsistent reporting conventions
make the risk of coverage bias high.

We deviate from pure PRISMA in two respects.  First, because no
mandatory pre-registration registry for computer science reviews
exists, we document the protocol in this section rather than
linking to a registration entry.  Second, we substitute formal
statistical meta-analysis (not applicable to our heterogeneous
primary outcomes) with a structured qualitative synthesis organised
by the taxonomy introduced in \cref{sec:taxonomy}.

\subsection{Database Selection}

Five academic databases were queried in November 2025--April 2026:

\begin{enumerate}[leftmargin=*,label=(\arabic*)]
\item \textbf{ACM Digital Library} — primary source for ACM CSUR,
  CCS, and STOC/FOCS proceedings.
\item \textbf{IEEE Xplore} — primary source for IEEE Transactions,
  conferences (QIP, ISCA, MICRO), and letters.
\item \textbf{arXiv} (\texttt{quant-ph} and \texttt{cs.LG}
  categories) — principal venue for rapid dissemination in
  quantum information; included because the field moves faster
  than journal review cycles.  Only preprints with at least 10
  Google Scholar citations as of April 2026 were retained in
  the final corpus to control for quality.
\item \textbf{Scopus} — for coverage of non-ACM/IEEE journals
  (Physical Review A, Nature Communications, Quantum, npj
  Quantum Information).
\item \textbf{Web of Science} — supplementary cross-check,
  particularly for citation counts and forward citation tracing.
\end{enumerate}

Google Scholar was used exclusively for citation-count verification
and forward-citation tracing and was not used as a primary
retrieval source, as it lacks structured field-level search.

\subsection{Search Query Construction}

Three search sessions were conducted with distinct thematic foci,
using Boolean syntax adapted for each database's query language.

\textbf{Primary query (data encoding and embedding).}
The following string was adapted for each database:
\begin{quote}
\texttt{("quantum machine learning" OR "quantum neural network" OR}\\
\texttt{"variational quantum circuit" OR "parameterized quantum circuit")}\\
\texttt{AND}\\
\texttt{("data encoding" OR "feature encoding" OR "quantum embedding"}\\
\texttt{OR "state preparation" OR "quantum feature map" OR}\\
\texttt{"amplitude encoding" OR "angle encoding" OR "basis encoding")}
\end{quote}

\textbf{Secondary query (noise and trainability).}
\begin{quote}
\texttt{("quantum machine learning" OR "variational quantum algorithm")}\\
\texttt{AND}\\
\texttt{("barren plateau" OR "vanishing gradient" OR "noise robustness"}\\
\texttt{OR "depolarizing noise" OR "quantum noise" OR "decoherence"}\\
\texttt{OR "kernel concentration" OR "exponential concentration")}
\end{quote}

\textbf{Tertiary query (hardware experiments and benchmarks).}
\begin{quote}
\texttt{("quantum classifier" OR "quantum kernel" OR "quantum SVM")}\\
\texttt{AND}\\
\texttt{("IBM" OR "Google" OR "IonQ" OR "Quantinuum" OR "superconducting"}\\
\texttt{OR "trapped-ion" OR "NISQ experiment")}\\
\texttt{AND}\\
\texttt{("encoding" OR "embedding" OR "feature map")}
\end{quote}

Temporal coverage was set to January 2017 (the year Biamonte
et al.~\cite{biamonte2017quantum} first systematised \QML) through
April 2026.  No language restriction was applied; non-English
records were evaluated by abstract before full-text screening.

\subsection{Eligibility Criteria}

\Cref{tab:inclusion-criteria} specifies the inclusion and exclusion
criteria applied at two stages: title/abstract screening (Stage 1)
and full-text eligibility assessment (Stage 2).

\begin{table}[t]
  \centering
  \caption{Inclusion and exclusion criteria for systematic corpus
    assembly.  Stage 1 criteria were applied to titles and abstracts;
    Stage 2 criteria required full-text access.}
  \label{tab:inclusion-criteria}
  \small
  \begin{tabularx}{\linewidth}{@{} l l X @{}}
    \toprule
    \textbf{Stage} & \textbf{Type} & \textbf{Criterion} \\
    \midrule
    S1 & Include & Discusses a quantum encoding or data
      representation strategy for machine learning \\
    S1 & Include & Published 2017–2026, in English or with English
      abstract available \\
    S1 & Exclude & Exclusively classical ML with no quantum
      component \\
    S1 & Exclude & Quantum error correction without ML application \\
    S1 & Exclude & Review or survey with no new encoding-specific
      analysis (handled separately) \\
    \midrule
    S2 & Include & Reports at least one quantitative result
      (qubit count, circuit depth, classification accuracy,
      fidelity, or gradient norm) for a specific encoding \\
    S2 & Include & Encoding is explicitly named or formally defined \\
    S2 & Include & Peer-reviewed venue, or arXiv preprint with
      $\geq\!10$ citations and 2017--2026 date range \\
    S2 & Exclude & Duplicate (same results reported in conference
      and journal version: retain journal) \\
    S2 & Exclude & Simulation only with no discussion of
      noise or hardware applicability (for 2022--2026 range) \\
    S2 & Exclude & Quantum annealing or adiabatic QC
      (encoding mechanisms are architecturally distinct) \\
    \bottomrule
  \end{tabularx}
\end{table}

\subsection{Screening Outcomes and PRISMA Flow}

The combined queries returned 1,847 unique records after
deduplication.  Stage 1 screening reduced this to 241 records.
Stage 2 full-text assessment yielded a final corpus of
\textbf{62 primary works}.  \Cref{fig:prisma-flow} depicts the
screening flow.

\begin{figure}[t]
  \centering
  \begin{tikzpicture}[
    box/.style={draw,rectangle,rounded corners=3pt,
                text width=5.2cm,align=center,minimum height=1cm,
                fill=gray!8,font=\small},
    exc/.style={draw,rectangle,rounded corners=3pt,
                text width=3.8cm,align=left,fill=red!8,
                font=\footnotesize},
    arr/.style={->,thick,>=Stealth},
    note/.style={font=\footnotesize,gray}]

    \node[box] (id)  {Records identified\\
                      {\footnotesize ACM DL: 412; IEEE: 538; arXiv: 623;
                       Scopus: 194; WoS: 80}\\
                      Total: \textbf{1,847}};

    \node[box,below=0.7cm of id]  (dedup)
      {After deduplication\\
       \textbf{1,847} (no cross-database duplicates at
       retrieval stage)};

    \node[box,below=0.7cm of dedup] (s1)
      {Stage 1 screened (title/abstract)\\
       Retained: \textbf{241}};

    \node[exc,right=1.0cm of s1]   (exc1)
      {Excluded ($n=1{,}606$):\\
       Not QML encoding: 1,112\\
       QEC-only: 247\\
       Survey with no new analysis: 247};

    \node[box,below=0.7cm of s1]  (s2)
      {Stage 2 full-text assessed\\
       \textbf{241}};

    \node[exc,right=1.0cm of s2]  (exc2)
      {Excluded ($n=179$):\\
       No quant.\ result: 81\\
       Duplicate (journal$>$conf): 48\\
       Annealing/adiabatic: 28\\
       Post-2022 sim-only: 22};

    \node[box,below=0.7cm of s2,
          fill=blue!10,draw=blue!60,line width=1.2pt]  (final)
      {Final corpus\\
       \textbf{62 primary works}};

    \draw[arr] (id) -- (dedup);
    \draw[arr] (dedup) -- (s1);
    \draw[arr] (s1) -- (exc1);
    \draw[arr] (s1) -- (s2);
    \draw[arr] (s2) -- (exc2);
    \draw[arr] (s2) -- (final);
  \end{tikzpicture}
  \caption{PRISMA-adapted screening flow for the systematic corpus
    assembly.  Five databases were queried using three thematic Boolean
    strings (see \cref{sec:methodology}).  The final corpus of 62
    primary works covers the period 2017--2026.}
  \label{fig:prisma-flow}
\end{figure}

\subsection{Data Extraction and Quality Assessment}

For each retained work, we extracted: (i)~encoding family, (ii)~qubit
count $n$ and feature dimension $D$, (iii)~circuit depth $d$,
(iv)~noise model (if any), (v)~task type and dataset, (vi)~primary
performance metric and reported value, and (vii)~whether the
experiment was hardware-based or simulated.

Risk of methodological bias was assessed on a three-point scale
(Low / Medium / High) for two dimensions: \emph{hardware validity}
(does the reported metric reflect execution on real hardware or
noise-free simulation?) and \emph{comparison validity}
(is the quantum result compared against a classical baseline?).
Of the 66 works, 23 reported hardware experiments, 31 used
noisy simulation, and 8 were purely theoretical or noiseless.
Only 37 of 66 included a classical comparison, a coverage gap
that constitutes one of the field's most significant methodological
limitations.

\section{Fundamentals of Quantum Feature Encoding}
\label{sec:fundamentals}

\subsection{Quantum States as Feature Representations}

In classical supervised learning, the feature map
$\phi\colon \calX \to \calF$ lifts raw inputs to a (possibly
infinite-dimensional) reproducing kernel Hilbert space,
where linear models become non-linear in the original input space \cite{schuld2021machine}.
Kernel methods compute the similarity between two samples via
$\kappa(\bx, \bx') = \langle\phi(\bx),\phi(\bx')\rangle_{\calF}$,
sidestepping explicit computation of $\phi$~\cite{scholkopf2002learning}.

Quantum mechanics provides a natural---and exponentially
large---feature space: the Hilbert space
$\calH = (\mathbb{C}^2)^{\otimes n}$ of $n$ qubits has dimension
$2^n$.  A quantum feature map $\phi_Q\colon \mathbb{R}^D \to \calH$
embeds each data point $\bx$ in this space via
\begin{equation}
  \phi_Q(\bx) = \ket{\psi(\bx)} = U_{\calE}(\bx)\ket{0}^{\otimes n},
  \label{eq:feature-map}
\end{equation}
where $U_{\calE}(\bx) \in \mathrm{U}(2^n)$ is the encoding unitary
parameterised by $\bx$ and $\ket{0}^{\otimes n}$ is the
computational zero state.  The corresponding \emph{quantum kernel}
is~\cite{havlicek2019supervised,schuld2019quantum}:
\begin{equation}
  \kappa_Q(\bx, \bx')
  = \lvert\langle 0^n\rvert U_{\calE}^\dagger(\bx')
        U_{\calE}(\bx)\vert 0^n\rangle\rvert^2
  = \lvert\!\braket{\psi(\bx')|\psi(\bx)}\!\rvert^2.
  \label{eq:kernel}
\end{equation}

Schuld~\cite{schuld2021quantum_models} proved that any \QML\ model
computable on quantum hardware (including variational quantum
classifiers and quantum kernel estimators) can be expressed as a
kernel method with kernel $\kappa_Q$.  This \emph{kernel universality}
result has an important corollary: the expressive power of any
quantum classifier is entirely determined by the kernel induced by
the encoding $\calE$, not by the structure of the downstream
variational ansatz.  In other words, \emph{encoding choice is the
decisive factor in model capacity}.

\subsection{Formal Definition of an Encoding Scheme}

\begin{definition}[Quantum encoding scheme]
An encoding scheme $\calE = (n, U_{\calE})$ consists of a qubit
count $n \in \mathbb{N}$ and a gate-parameterised unitary
$U_{\calE}(\bx) \in \mathrm{U}(2^n)$ such that
$\bx \mapsto U_{\calE}(\bx)\ket{0}^n$ is computable on a gate-based
quantum processor.  The scheme is characterised by:
\begin{itemize}
\item \emph{Qubit complexity} $q(\calE, D)$: the minimum number of
  qubits required to encode a feature vector of dimension $D$.
\item \emph{Gate complexity} $g(\calE, D)$: the number of
  elementary gates (two-qubit gates unless stated otherwise) in
  the implementation of $U_{\calE}(\bx)$.
\item \emph{Depth complexity} $d(\calE, D)$: the length of the
  longest sequential dependency chain (critical path) in
  $U_{\calE}(\bx)$.
\item \emph{Fourier frequency spectrum} $\Omega(\calE)$: the set of
  frequency vectors $\omega \in \mathbb{R}^D$ appearing in the
  partial Fourier expansion of the model output.
\end{itemize}
\end{definition}

The four characteristics are not independent: they satisfy the
following general trade-off, which we formalise as
\cref{prop:tradeoff}.

\begin{proposition}[Compression--depth trade-off]
\label{prop:tradeoff}
For any encoding $\calE$ of a $D$-dimensional feature vector into
$n < D$ qubits, the gate complexity satisfies
$g(\calE, D) = \Omega(D)$.  In particular,
$g(\calE, D) \cdot n \geq D$ for all known exact state-preparation
circuits.
\end{proposition}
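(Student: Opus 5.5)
The natural route is a parameter-counting (dimension) argument. An encoding that is faithful in the sense of \cref{eq:feature-map} must give a map $\bx \mapsto U_{\calE}(\bx)\ket{0}^{\otimes n}$ that is injective, at least generically, on a $D$-dimensional domain. Equivalently, it must give a generically full-rank family of states. Concretely, I would read the statement as saying that the image of $\bx \mapsto \ket{\psi(\bx)}$ is a $D$-dimensional real submanifold of the projective state space of $\calH$.

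The plan is to bound how many independent real parameters a circuit with $g$ elementary gates can carry, and compare that with $D$.

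\textbf{Step 1: fix the gate model.} Take the gate set to be arbitrary single-qubit rotations together with a fixed entangling gate (e.g.\ \CNOT), and measure gate complexity in elementary gates. Each parameterised gate depends on $\calO(1)$ real parameters. Each single-qubit gate can be absorbed into the nearest two-qubit gate, so the circuit output is a smooth function of at most $c\,g$ real parameters for an absolute constant $c$.

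\textbf{Step 2: the circuit can only load $\calO(g)$ independent directions.} The circuit output is a smooth map from at most $c\,g$ real parameters to the state space. Our $D$-dimensional encoding factors through it: $\bx$ enters only through the gate angles. By the rank theorem (equivalently Sard's theorem), the image of the composite map $\bx \mapsto \ket{\psi(\bx)}$ has dimension at most $c\,g$. Faithfulness then forces $D \le c\,g$, i.e.\ $g(\calE,D) = \Omega(D)$. The hypothesis $n < D$ is used only to rule out the trivial alternative. With $n \ge D$ one can put one feature per qubit with single-qubit gates only, and then the two-qubit gate count is $0$.

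\textbf{Step 3: the multiplicative bound.} The second claim, $g\cdot n \ge D$ for known exact state-preparation circuits, follows from the explicit constructions. For amplitude encoding, the Möttönen and Shende--Bullock--Markov decompositions~\cite{mottonen2004decompositions,shende2006synthesis} use $\Theta(2^n) = \Theta(D)$ \CNOT\ gates, so $g\cdot n \ge g \ge D$ up to constants. More generally, I would argue that in any circuit on $n$ qubits the parameterised single-qubit rotations number at most about $n$ per layer. Each layer is separated by at least one two-qubit gate, otherwise adjacent single-qubit gates merge. Hence the number of real parameters is at most $\calO(n\cdot(g+1))$, and injectivity gives $n\,g \gtrsim D$. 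For the dense and re-uploading families, I would check the inequality directly against the resource table in \cref{sec:costs}.

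\textbf{Main obstacle.} The hard part is making Step 2 rigorous and honest about constants and about what \emph{encoding} means. Non-injective encodings, for example ones that discard features, trivially violate the bound, so I must impose generic injectivity or a full-rank Jacobian as a hypothesis. The statement ``$g\cdot n \ge D$'' also holds only up to the constant $c$ unless gates are normalised to one parameter each. I would therefore state the rigorous part as $g \ge D/c$ under a full-rank Jacobian assumption. I would present $g\cdot n \ge D$ as the verified consequence for the standard exact constructions, rather than as a theorem about all circuits.
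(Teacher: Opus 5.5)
Your dimension-counting route differs from the paper's argument and, once repaired, is the sounder one. As written, though, Step~2 contains a false step.

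\textbf{The gap.} Step~2 breaks at the absorption claim. Suppose $g$ counts two-qubit gates. That is the paper's default in its definition of gate complexity, and it is also the reading your own remarks adopt (``two-qubit gate count $0$'', ``layers separated by two-qubit gates''). Merging single-qubit gates then leaves one $\mathrm{SU}(2)$ block per wire segment, and there are $n+2g_2$ segments. Blocks on wires that meet no entangling gate have nothing to be absorbed into. So the honest count is $\calO(n+g_2)$ parameters, not $\calO(g_2)$.

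With that count, your rank argument only gives $D \le 2n + c\,g_2$, since an untouched qubit contributes at most its two Bloch-sphere dimensions. That is $g_2=\Omega(D-2n)$, which is vacuous once $n\ge D/2$.

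This is not a loose constant; the claim itself fails under that reading. Dense angle encoding uses $n=\lceil D/2\rceil<D$ qubits and no two-qubit gates (\cref{tab:encoding-comparison,tab:fidelity-comparison}). Yet $\bx\mapsto\ket{\psi(\bx)}$ has a generically full-rank Jacobian. It therefore meets your faithfulness hypothesis while violating both $g=\Omega(D)$ and $g\cdot n\ge D$. Two consequences follow:
\begin{itemize}
\item Your remark that $n<D$ is needed ``only'' to exclude one-feature-per-qubit encodings puts the threshold in the wrong place.
\item In Step~3, going from $\calO(n(g+1))$ parameters to $n\,g\gtrsim D$ drops the $+1$, and that is exactly the $g_2=0$ case.
\end{itemize}

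\textbf{The repair.} Let $g$ count all parameterised elementary gates, as the paper's sketch implicitly does (it assigns single-qubit gates fan-in $1$). Then the argument goes through:
\begin{itemize}
\item Each gate carries $\calO(1)$ real parameters, so no absorption is needed.
\item The chain rule bounds the rank of the Jacobian of $\bx\mapsto\ket{\psi(\bx)}$ by $c\,g$, so $g\ge D/c$. For amplitude encoding use $D-1$, because $\bx\mapsto\hat{\bx}$ discards the norm.
\item $g\cdot n\ge D$ follows once $n\ge c$, and the hypothesis $n<D$ plays no role.
\end{itemize}

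\textbf{Comparison with the paper.} The paper counts fan-in: each gate reads $\calO(1)$ features, so $\Omega(D/2)$ gates are needed for the state to depend on all of them. That needs only non-trivial dependence on each feature. However, it presupposes that each gate angle is a function of boundedly many features. This is false for the M\"ott\"onen circuits the paper cites: the top-level rotation angle is fixed by the ratio of the norms of the two halves of $\bx$, so it depends on all $D$ features. Moreover, ``depends on every feature'' is already achieved by a single $R_Y(\sum_k x_k)$.

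Your rank hypothesis is the right replacement, and it lets gate angles be arbitrary smooth functions of $\bx$. State the regularity assumption explicitly: smoothness for the rank version, or continuity plus invariance of domain for an injectivity version. Without it, a discontinuous digit-interleaving map packs all $D$ features injectively into one rotation angle and defeats any counting bound.
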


\begin{proof}[Proof sketch]
The encoded state $\ket{\psi(\bx)}$ must depend non-trivially on all
$D$ components of $\bx$ (otherwise some features are discarded).
Each gate in $U_{\calE}(\bx)$ can depend on at most $\calO(1)$
components (for single-qubit gates) or $\calO(1)$ pairs of
components (for two-qubit gates).  To ``communicate'' $D$ distinct
values into the circuit, at least $\Omega(D / \text{fan-in})$
gates are needed, where fan-in is bounded by the gate arity.
For two-qubit gates, this yields $g = \Omega(D/2)$.
A matching upper bound $g = \calO(D)$ is achieved by the
constructive circuits of
M{\"o}tt{\"o}nen et al.~\cite{mottonen2004decompositions}.
\end{proof}

\Cref{prop:tradeoff} implies that any scheme achieving
$n < D$ (qubit compression) must have $g \geq D/n > 1$,
creating depth overhead that grows at least linearly with $D$.
This is the formal statement of the ``quantum input problem.''

\subsection{The Fourier Analysis of Encoding}
\label{subsec:fourier}

Schuld et al.~\cite{schuld2021effect} established that the output
$f(\bx) = \mathrm{Tr}[O\,\rho(\bx)]$ of any \QML\ model
(where $O$ is an observable and
$\rho(\bx) = U_{\calE}(\bx)\ketbra{0}{0}U_{\calE}^\dagger(\bx)$
is the encoded state) can be written as a multivariate Fourier series:
\begin{equation}
  f(\bx) = \sum_{\omega \in \Omega}
            c_\omega\, e^{i \omega \cdot \bx},
  \label{eq:fourier-series}
\end{equation}
where $\Omega \subset \mathbb{Z}^D$ is the \emph{frequency spectrum}
and $c_\omega \in \mathbb{C}$ are Fourier coefficients that depend on
the trainable parameters $\btheta$.

The key structural result is:

\begin{theorem}[Encoding-determined spectrum; Schuld et al.~\cite{schuld2021effect}]
\label{thm:spectrum}
For a circuit that applies Pauli-rotation encoding
$R_k(x_k) = e^{-ix_k P_k/2}$ for each feature $k$, the accessible
frequency set is
\begin{equation}
  \Omega = \bigl\{\omega \in \mathbb{Z}^D \bigm|
             |\omega_k| \leq L_k,\ k = 1,\ldots,D\bigr\},
  \label{eq:spectrum}
\end{equation}
where $L_k$ is the number of times feature $x_k$ is encoded (the
number of re-uploading repetitions for that feature).  Adding one
re-uploading layer for all features multiplies $|\Omega|$ by a
factor of $3^D$.
\end{theorem}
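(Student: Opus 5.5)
The plan is to follow the spectral argument of Schuld et al.~\cite{schuld2021effect}: write the model output explicitly, diagonalise every encoding gate, and read off the frequencies as differences of eigenvalue sums. The only earlier input needed is the Fourier form \eqref{eq:fourier-series}; everything else is bookkeeping.

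\textbf{Step 1: write the circuit as alternating blocks.} Write the full circuit as $U(\bx) = W^{(M)} S^{(M)}(\bx) W^{(M-1)} \cdots S^{(1)}(\bx) W^{(0)}$. The $W^{(j)}$ are trainable, $\bx$-independent unitaries. Each $S^{(j)}$ is a tensor product of Pauli rotations $e^{-i x_k P_k/2}$, and feature $x_k$ appears in exactly $L_k$ of the $S^{(j)}$.

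\textbf{Step 2: diagonalise the encoding gates.} Each $P_k$ has spectrum $\{\pm 1\}$, so $P_k = V_k \,\mathrm{diag}(1,-1)\, V_k^\dagger$. The fixed unitaries $V_k, V_k^\dagger$ are absorbed into the neighbouring trainable blocks. After this, every encoding gate is diagonal in the computational basis, with entries $e^{\mp i x_k/2}$.

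\textbf{Step 3: expand the output along basis paths.} Expand $f(\bx) = \bra{0}U^\dagger(\bx)\, O\, U(\bx)\ket{0}$ by inserting computational-basis resolutions of the identity between all blocks.
\begin{itemize}
\item Each path through the ket contributes a phase $\exp\bigl(-i\sum_k x_k \Lambda_k\bigr)$, where $\Lambda_k = \tfrac12\sum_{j=1}^{L_k} s_j$ with $s_j \in \{\pm1\}$.
\item The bra contributes $\exp\bigl(+i\sum_k x_k \Lambda'_k\bigr)$ with an independent choice of signs.
\item The frequency in coordinate $k$ is therefore $\omega_k = \Lambda'_k - \Lambda_k = \sum_j (s'_j - s_j)/2$.
\item Each summand lies in $\{-1,0,1\}$, so $\omega_k \in \mathbb{Z}$ and $|\omega_k| \le L_k$.
\end{itemize}
Because the features enter through separate gates, the coordinates decouple. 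This gives $\Omega \subseteq \{\omega \in \mathbb{Z}^D : |\omega_k| \le L_k\}$. The coefficients $c_\omega$ collect the products of entries of $W^{(j)}$ and $O$, and so depend only on $\btheta$.

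\textbf{Step 4: show every frequency is reachable.} Every integer in $[-L_k, L_k]$ arises as such a difference: take $m$ positions with $s'_j - s_j = 2$ and the rest equal. For generic choices of the $W^{(j)}$ and $O$, the corresponding $c_\omega$ do not vanish. Reading ``accessible'' as ``appears for some parameter setting,'' this gives equality in \eqref{eq:spectrum}.

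\textbf{Step 5: count, and correct the scaling claim.} The count is $|\Omega| = \prod_k (2L_k+1)$. Adding one layer for every feature changes this by the factor $\prod_k (2L_k+3)/(2L_k+1)$.
\begin{itemize}
\item This factor equals exactly $3^D$ only for the first encoding layer, i.e.\ going from $L_k = 0$ to $L_k = 1$.
\item For larger $L_k$ the factor tends to $1$, so the growth is polynomial in $L$: $(2L+1)^D$.
\end{itemize}
So the final sentence of the theorem does not hold as stated beyond the first layer. I would prove and state this corrected form, interpreting the $3^D$ remark as the contribution of the first layer.

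\textbf{Main obstacle.} The delicate step is genericity in Step 4, i.e.\ showing that no frequency's coefficient vanishes identically. My approach: each $c_\omega$ is a polynomial in the entries of the $W^{(j)}$ and $O$. I would exhibit a single explicit choice with $c_\omega \ne 0$ for every $\omega$, for instance Hadamard-type $W$ blocks and $O = Z$ on one qubit per feature. A polynomial that is nonzero at one point is nonzero on a dense open set, which would finish the argument.
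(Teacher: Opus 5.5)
Your Steps 1--3 are the standard eigenvalue-difference argument of Schuld et al. The paper gives no proof of its own beyond that citation, so this is exactly what it relies on. Your Step 5 correction is also right. The statement's own formula gives $|\Omega| = \prod_k (2L_k+1)$, so a factor $3^D$ occurs only for the first layer. The paper itself later writes $|\Omega| = \calO((2L+1)^D)$ after $L$ re-uploading layers, which contradicts a factor $3^D$ per layer; that would give $3^{LD}$.

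\textbf{The gap is the witness in your ``main obstacle'' paragraph.} As written, it fails. Take a single qubit with every trainable block equal to $H$ and $O = Z$.
\begin{itemize}
\item For $L = 1$ you get $f(x) = \cos x$, so $c_0 = 0$.
\item For $L = 2$ you get $f(x) = \sin^2 x = (1-\cos 2x)/2$, so $c_{\pm 1} = 0$.
\item If ``$O = Z$ on one qubit per feature'' means $\sum_k Z_{q_k}$ on a product circuit, then $f$ splits into univariate terms and every mixed frequency vanishes.
\end{itemize}

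\textbf{The repair.} You do not need one setting that works for all $\omega$ at once; it suffices that each $c_\omega$ is not identically zero. Each $c_\omega$ is real-analytic on the connected manifold of trainable unitaries, so its nonvanishing set is open and dense. State the ``polynomial nonzero at one point'' step this way, since the domain is not $\mathbb{R}^N$. Finitely many dense open sets then intersect in a dense open set, which gives one generic setting for all $\omega$.

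Per-frequency witnesses are easy to build. Take a single feature uploaded $L$ times on one qubit, and fix $0 < m \le L$:
\begin{enumerate}
\item Keep the qubit in $\ket{0}$ through the first $L-m$ diagonalised encoding gates; they contribute only a global phase.
\item Apply $H$.
\item Put identity blocks between the remaining $m$ gates, so they merge into $e^{-imxZ/2}$.
\item Apply $H$ and measure $O = \ketbra{0}{0}$.
\end{enumerate}
This gives $f(x) = \cos^2(mx/2) = (1+\cos mx)/2$, so $c_0$ and $c_{\pm m}$ are nonzero.

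With one qubit per feature, use tensor-product blocks and the observable $\bigotimes_k (\ketbra{0}{0})_{q_k}$. Then $c_\omega = \prod_k c^{(k)}_{\omega_k}$ factorises and is nonzero. This construction also makes explicit that ``accessible'' must be read over unrestricted trainable blocks. For a fixed, restricted ansatz the realised spectrum can be strictly smaller than the set in \eqref{eq:spectrum}.
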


\Cref{thm:spectrum} has direct implications for architecture
design.  Angle encoding with $L_k = 1$ for all $k$ yields
$|\Omega| = 3^D$, which is finite but exponentially large in $D$.
Data re-uploading~\cite{perez2020data} with $L$ identical layers
expands $L_k = L$ and enriches the spectrum without increasing
qubit count, at the cost of linear depth growth.
Amplitude encoding, while providing access to the full
$2^n$-dimensional Hilbert space, does so through a mechanism
(multi-controlled gates) that cannot be expressed as a Pauli
rotation, and therefore does not fit the Fourier framework directly.

Gil Vidal and Theis~\cite{gilvidal2020input} analysed the
\emph{redundancy} that arises when encoding the same feature
multiple times within one layer (e.g., repeating $R_Y(x_i)$ twice).
They showed that this does not increase the expressible frequency set
beyond what a single application provides, but does affect the
coefficient amplitudes.  This justifies the strict separation between
\emph{re-uploading} (applying the encoding in sequential layers
separated by trainable blocks) and mere \emph{repetition} (applying
the encoding gate multiple times without interleaved trainable
unitaries).

\subsection{Encoding Families: Definitions, Costs, and Expressivity}
\label{subsec:families}

We now define the six encoding families considered in this survey.
A summary comparison appears in \cref{tab:encoding-comparison}.

\subsubsection*{Basis Encoding}

Basis encoding represents a binary string $b \in \{0,1\}^D$
(or its $q$-ary generalisation) as a computational-basis state.
For $b \in \{0,1\}^D$, the encoding unitary is:
\begin{equation}
  U_{\text{basis}}(b) = \bigotimes_{k=1}^{D} X^{b_k},
\end{equation}
where $X$ is the Pauli-$X$ gate.  The encoding requires $q = D$
qubits, $g \leq D$ single-qubit gates, and $d = 1$ (all gates are
parallelisable).  The resulting state is a standard basis vector
$\ket{b} = \ket{b_1 \cdots b_D}$.

\textbf{Expressivity.}  Basis encoding encodes only the discrete
combinatorial structure of $b$; it provides no superposition and no
entanglement.  The induced kernel is $\kappa(b, b') = \delta_{b,b'}$
(Kronecker delta), which is the least powerful kernel possible.
Basis encoding is primarily used as a subroutine in combinatorial
optimisation algorithms, not in classification or regression tasks.

\subsubsection*{Angle Encoding}

Angle (or rotation) encoding maps each real-valued feature $x_k$
to the rotation angle of a single-qubit gate.  The most common
variant uses Pauli-$Y$ rotations:
\begin{equation}
  U_{\text{angle}}(\bx) = \bigotimes_{k=1}^{D} R_Y(x_k),
  \qquad R_Y(\theta) = \begin{pmatrix}
    \cos(\theta/2) & -\sin(\theta/2) \\
    \sin(\theta/2) &  \cos(\theta/2)
  \end{pmatrix}.
  \label{eq:angle-encoding}
\end{equation}
This requires $q = D$ qubits and $d = 1$ (all rotations are
parallelisable).  Gate count: $g = D$.

\textbf{Expressivity.}  By \cref{thm:spectrum}, the model has
frequency spectrum $\Omega = \{-1, 0, 1\}^D$ and is therefore
a degree-1 multivariate trigonometric polynomial in each
variable separately.  This limits the model to smooth, low-frequency
functions and makes it inadequate for tasks requiring sharp decision
boundaries.

\textbf{Dense angle encoding.}  A more efficient variant uses both
the polar and azimuthal Bloch-sphere degrees of freedom per qubit:
\begin{equation}
  U_{\text{dense}}(\bx) = \bigotimes_{k=1}^{\lceil D/2 \rceil}
    R_Z(x_{2k-1}) R_Y(x_{2k}),
\end{equation}
which reduces qubit count to $\lceil D/2 \rceil$ at the cost of
depth $d = 2$.  Schuld et al.~\cite{schuld2021effect} study this
variant explicitly.

\subsubsection*{Amplitude Encoding}

Amplitude encoding represents the normalised feature vector
$\hat{\bx} = \bx / \|\bx\|_2$ as the amplitudes of a quantum state:
\begin{equation}
  U_{\text{amp}}(\bx)\ket{0}^n
  = \frac{1}{\|\bx\|_2}\sum_{j=0}^{2^n - 1} x_j \ket{j}
  \equiv \ket{\psi_\bx},
  \label{eq:amp-encoding}
\end{equation}
requiring $n = \lceil \log_2 D \rceil$ qubits.  This achieves
exponential qubit compression.

\textbf{Gate complexity.}  Grover and Rudolph~\cite{grover2002creating}
showed that for feature vectors drawn from efficiently integrable
probability distributions, a polynomial-depth preparation circuit
exists.  For general classical data, M{\"o}tt{\"o}nen
et al.~\cite{mottonen2004decompositions} provide the standard
recursive construction using $\calO(2^n)$ two-qubit (CNOT) gates
and Shende et al.~\cite{shende2006synthesis} prove that
$\calO(4^n n)$ gates are necessary and sufficient for exact
preparation of an arbitrary $n$-qubit state.  In practice,
approximate methods~\cite{grover2002creating} reduce this to
$\calO(D)$ CNOTs at the cost of bounded approximation error.

\textbf{Expressivity.}  The amplitude kernel
$\kappa_{\text{amp}}(\bx, \bx')
= \lvert\braket{\psi_\bx|\psi_{\bx'}}\rvert^2
= \lvert\hat{\bx}^\top \hat{\bx}'\rvert^2$
is equivalent to a squared cosine kernel: it captures the second
power of the inner product in the original feature space.  This
is richer than the linear kernel but structurally equivalent to
a Gaussian kernel after normalisation~\cite{schuld2019quantum}.
For downstream variational circuits, amplitude states span the full
$2^n$-dimensional Hilbert space, granting access to exponentially
richer function classes than angle encoding.

\subsubsection*{Data Re-Uploading}

P\'erez-Salinas et al.~\cite{perez2020data} proposed repeatedly
interleaving data-encoding gates with trainable blocks:
\begin{equation}
  U_{\text{reu}}(\bx, \btheta)
  = \prod_{\ell=1}^{L} W_\ell(\btheta_\ell)\, S(\bx),
  \label{eq:reuploading}
\end{equation}
where $S(\bx) = \bigotimes_{k=1}^D R_Y(x_k)$ is the encoding
block and $W_\ell(\btheta_\ell)$ is a trainable unitary (e.g.,
a hardware-efficient ansatz layer~\cite{kandala2017hardware}).
By \cref{thm:spectrum}, each additional layer $\ell$ extends the
frequency spectrum, so that after $L$ layers,
$|\Omega| = \calO((2L+1)^D)$.

\textbf{Universal approximation.}
P\'erez-Salinas et al.~\cite{perez2020data} proved that a single
qubit with sufficiently many re-uploading layers is a universal
classifier (connecting to the universal approximation theorem
for neural networks via Fourier completeness).
Goto et al.~\cite{goto2021universal} subsequently proved universal
approximation for quantum-enhanced feature spaces under amplitude
encoding, generalising the result to multi-qubit systems.

\textbf{Experimental validation.}
Dutta et al.~\cite{dutta2022single} provided the first hardware
demonstration of the data re-uploading scheme on a single-qubit
trapped-ion device, confirming that the universal classifier
is physically realisable.

\textbf{Resource profile.}  $q = D$ (or $q = 1$ for the
single-qubit variant), $g = L \cdot D + L \cdot g_W$
(where $g_W$ is the gate count of $W_\ell$), $d = L(1 + d_W)$.
The depth scales linearly with $L$, providing a tunable
expressivity--depth trade-off absent in single-layer encodings.

\subsubsection*{IQP / Hamiltonian Encoding}

Instantaneous Quantum Polynomial (IQP) encoding, introduced in
the feature maps of Havl\'{\i}\v{c}ek et
al.~\cite{havlicek2019supervised}, applies Hadamard layers
interleaved with diagonal unitaries parameterised by degree-$k$
monomials of the input:
\begin{equation}
  U_{\text{IQP}}(\bx)
  = H^{\otimes n}\exp\!\left(
      i \sum_{|S| \leq k} x_S Z_S \right) H^{\otimes n},
  \label{eq:IQP}
\end{equation}
where $H$ is the Hadamard gate, $Z_S = \bigotimes_{j \in S} Z_j$,
and $x_S = \prod_{j \in S} x_j$ for a subset $S \subseteq [D]$
with $|S| \leq k$ (typically $k = 2$, encoding pairwise products
of features).

\textbf{Quantum computational hardness.}
The crucial property of IQP encoding is that the kernel
$\kappa_{\text{IQP}}(\bx, \bx')$ is believed to be classically
intractable to compute exactly (under the conjecture that
the polynomial hierarchy does not collapse to the third level),
which Havl\'{\i}\v{c}ek et al.~\cite{havlicek2019supervised}
exploited to argue for quantum advantage in classification.
Huang et al.~\cite{huang2022quantum} subsequently demonstrated
quantum advantage in learning from quantum experiments using
related feature structures.
However, Huang et al.~\cite{huang2021information} also proved
information-theoretic lower bounds showing that quantum advantage
can be exponentially concentrated: for structured classical
data, classically simulable encodings may achieve equal accuracy.

\textbf{Resource profile.}  $q = D$, $g = \calO(D^k / k!)$,
$d = \calO(k)$ (constant for fixed $k$).
For $k = 2$ and $D = 16$: $g \approx 120$, $d = 2$.

\subsubsection*{Density-Matrix Encoding}

A less common but theoretically principled approach encodes
classical data as a density matrix rather than a pure state.
Lloyd et al.~\cite{lloyd2020quantum} proposed embedding feature
vectors as density matrices via quantum metric learning,
training the encoding circuit to maximally separate class
representatives in trace-distance.  Gonz\'alez
et al.~\cite{gonzalez2022learning} studied learning with density
matrices combined with random Fourier features, providing
a classical approximation framework for quantum kernel methods.
Density-matrix encoding subsumes pure-state encoding as a
special case and allows the representation of classical probability
distributions over inputs, but its practical circuit implementation
is substantially more involved.

\begin{table*}[t]
  \centering
  \caption{Comparative summary of principal quantum feature encoding
    families. Qubit count $q$, gate count $g$, and depth $d$ are
    expressed as functions of feature dimension $D$ and qubit count
    $n = \lceil\log_2 D\rceil$.  Fourier degree denotes the maximum
    frequency in $\Omega$; ``dense'' ($\to\infty$ with $L$) means
    the spectrum grows with re-uploading layers $L$.  \NISQ\
    suitability: \textbf{A}=Excellent, \textbf{B}=Good,
    \textbf{C}=Limited, \textbf{D}=Impractical at current $p$.}
  \label{tab:encoding-comparison}
  \small
  \setlength{\tabcolsep}{4.5pt}
  \begin{tabularx}{\linewidth}{@{}l c c c c c c c X@{}}
    \toprule
    \textbf{Encoding}
      & $q$
      & $g$
      & $d$
      & \textbf{Qubit save}
      & \textbf{Fourier deg.}
      & \textbf{Kernel type}
      & \textbf{NISQ}
      & \textbf{Key reference} \\
    \midrule
    Basis
      & $D$  & $\calO(D)$    & $\calO(1)$
      & None  & 0 (discrete)  & Kronecker $\delta$
      & A  & Standard \\
    Angle (std.)
      & $D$  & $D$           & $1$
      & None  & 1 per dim.    & Trigonometric/poly.
      & A  & \cite{schuld2021effect} \\
    Angle (dense)
      & $D/2$& $D$           & $2$
      & $2\times$ & 1 per dim.& As above, $q$ halved
      & A  & \cite{schuld2021effect} \\
    Data re-uploading
      & $D$  & $LD + Lg_W$   & $L(1{+}d_W)$
      & None  & $L$ per dim.  & Rich (controllable)
      & B  & \cite{perez2020data,dutta2022single} \\
    IQP ($k{=}2$)
      & $D$  & $\calO(D^2)$  & $2$
      & None  & Degree-2 mono.& Classically hard
      & B  & \cite{havlicek2019supervised} \\
    Amplitude
      & $\lceil\log_2 D\rceil$ & $\calO(D)$ & $\calO(D)$
      & Exp.  & Full Hilbert  & Cosine / exp.
      & D  & \cite{grover2002creating,nielsen2010quantum} \\
    Density matrix
      & $2\lceil\log_2 D\rceil$ & $\calO(D^2)$ & $\calO(D)$
      & Exp.  & Full Hilbert  & Trace-dist. kernel
      & D  & \cite{lloyd2020quantum,gonzalez2022learning} \\
    \bottomrule
  \end{tabularx}
\end{table*}

\Cref{fig:taxonomy-radar} visualises the three-axis taxonomy
introduced in Contribution~C1.  The radar chart makes explicit
the structural trade-off that Table~\ref{tab:encoding-comparison}
encodes numerically: no encoding family dominates all three axes,
and the Pareto frontier between expressivity and robustness
(analysed quantitatively in \cref{sec:framework}) is monotone
decreasing.  Amplitude encoding occupies a unique position: it
is simultaneously the most expressive and the most qubit-efficient
strategy, but its noise robustness score of $0.67$ at
$p_2 = 3\times10^{-3}$ is the lowest in the comparison, a
consequence of the $\mathcal{O}(D)$ two-qubit gate count derived
in \cref{prop:tradeoff}.
\section{Embedding Costs: Theoretical and Practical Analysis}
\label{sec:costs}

\begin{figure}[t]
  \centering
  \begin{tikzpicture}[scale=1.05]
    \def\R{2.5}

    \foreach \r/\op in {0.25/12, 0.50/18, 0.75/25, 1.00/35}{
      \draw[gray!\op!black, line width=0.4pt]
        ({0},{\r*\R})
        -- ({-0.866*\r*\R},{-0.5*\r*\R})
        -- ({ 0.866*\r*\R},{-0.5*\r*\R})
        -- cycle;
    }
    \foreach \a in {90,210,330}{
      \draw[gray!40, line width=0.5pt,->]
        (0,0) -- ({1.12*\R*cos(\a)},{1.12*\R*sin(\a)});
    }

    \node[above, font=\small\bfseries] at (0,{1.18*\R})
      {Expressivity};
    \node[below left, align=center, font=\small\bfseries]
      at ({-0.866*1.18*\R},{-0.5*1.18*\R})
      {Noise\\Robustness};
    \node[below right, align=center, font=\small\bfseries]
      at ({0.866*1.18*\R},{-0.5*1.18*\R})
      {Qubit\\Efficiency};

    \foreach \v/\lbl in {0.25/0.25, 0.50/0.50, 0.75/0.75, 1.00/1.00}{
      \node[gray!60, font=\tiny, right=1pt]
        at ({0.04},{\v*\R}) {\lbl};
    }

    \fill[cbBlue, opacity=0.18]
      ({0},{0.25*\R})
      -- ({-0.866*0.94*\R},{-0.5*0.94*\R})
      -- ({ 0.866*0.05*\R},{-0.5*0.05*\R})
      -- cycle;
    \draw[cbBlue, line width=1.2pt]
      ({0},{0.25*\R})
      -- ({-0.866*0.94*\R},{-0.5*0.94*\R})
      -- ({ 0.866*0.05*\R},{-0.5*0.05*\R})
      -- cycle;
    \foreach \px/\py in
      {0/0.625, {-0.866*0.94*2.5}/{-0.5*0.94*2.5},
       {0.866*0.05*2.5}/{-0.5*0.05*2.5}}{
      \filldraw[cbBlue] (\px,\py) circle (2.2pt);
    }

    \fill[cbPurple, opacity=0.18]
      ({0},{0.60*\R})
      -- ({-0.866*0.73*\R},{-0.5*0.73*\R})
      -- ({ 0.866*0.05*\R},{-0.5*0.05*\R})
      -- cycle;
    \draw[cbPurple, line width=1.2pt, dashed]
      ({0},{0.60*\R})
      -- ({-0.866*0.73*\R},{-0.5*0.73*\R})
      -- ({ 0.866*0.05*\R},{-0.5*0.05*\R})
      -- cycle;
    \foreach \px/\py in
      {0/1.50, {-0.866*0.73*2.5}/{-0.5*0.73*2.5},
       {0.866*0.05*2.5}/{-0.5*0.05*2.5}}{
      \filldraw[cbPurple] (\px,\py) circle (2.2pt);
    }

    \fill[cbOrange, opacity=0.18]
      ({0},{0.70*\R})
      -- ({-0.866*0.82*\R},{-0.5*0.82*\R})
      -- ({ 0.866*0.05*\R},{-0.5*0.05*\R})
      -- cycle;
    \draw[cbOrange, line width=1.2pt, dash dot]
      ({0},{0.70*\R})
      -- ({-0.866*0.82*\R},{-0.5*0.82*\R})
      -- ({ 0.866*0.05*\R},{-0.5*0.05*\R})
      -- cycle;
    \foreach \px/\py in
      {0/1.75, {-0.866*0.82*2.5}/{-0.5*0.82*2.5},
       {0.866*0.05*2.5}/{-0.5*0.05*2.5}}{
      \filldraw[cbOrange] (\px,\py) circle (2.2pt);
    }

    \fill[cbRed, opacity=0.15]
      ({0},{1.00*\R})
      -- ({-0.866*0.67*\R},{-0.5*0.67*\R})
      -- ({ 0.866*1.00*\R},{-0.5*1.00*\R})
      -- cycle;
    \draw[cbRed, line width=1.2pt, dotted]
      ({0},{1.00*\R})
      -- ({-0.866*0.67*\R},{-0.5*0.67*\R})
      -- ({ 0.866*1.00*\R},{-0.5*1.00*\R})
      -- cycle;
    \foreach \px/\py in
      {0/2.50, {-0.866*0.67*2.5}/{-0.5*0.67*2.5},
       {0.866*1.00*2.5}/{-0.5*1.00*2.5}}{
      \filldraw[cbRed] (\px,\py) circle (2.2pt);
    }

    \node[anchor=north west, font=\footnotesize,
          inner sep=3pt, draw=gray!40, rounded corners=3pt,
          fill=white, opacity=0.95, text opacity=1]
      at (0.55,-1.8) {%
      \begin{tabular}{@{}ll@{}}
        \textcolor{cbBlue}{\rule[0.5ex]{10pt}{1.5pt}}
          & Angle \\[1pt]
        \textcolor{cbPurple}{\makebox[10pt]{\rule[0.5ex]{4pt}{1.5pt}\,
          \rule[0.5ex]{4pt}{1.5pt}}}
          & IQP ($k{=}2$) \\[1pt]
        \textcolor{cbOrange}{\makebox[10pt]{\rule[0.5ex]{3pt}{1.5pt}%
          $\cdot$\rule[0.5ex]{3pt}{1.5pt}}}
          & Re-up.\ ($L{=}3$) \\[1pt]
        \textcolor{cbRed}{\makebox[10pt]{%
          $\cdots\cdots$}}
          & Amplitude \\
      \end{tabular}};

  \end{tikzpicture}
  \caption{Three-axis cost--expressivity--robustness taxonomy for the
    four principal encoding families.  Each axis is normalised
    to $[0,1]$: \emph{Expressivity} = log-normalised Fourier degree
    from \cref{thm:spectrum}; \emph{Noise Robustness} = encoding-layer
    fidelity $F_{\text{enc}}$ at $p_2=3\times10^{-3}$, $D=64$
    from \cref{prop:fidelity}; \emph{Qubit Efficiency} =
    $\log_2(D)\,/\,q(\mathcal{E},D)$, normalised so that amplitude
    encoding scores $1.00$.  No encoding dominates all three axes
    simultaneously; the Pareto analysis in \cref{sec:framework}
    resolves this trilemma for specific hardware regimes.}
  \label{fig:taxonomy-radar}
\end{figure}
\subsection{Taxonomy of Cost Metrics}

Before analysing individual encodings, we establish a precise
cost taxonomy.  Three classes of resource are relevant:

\textbf{Spatial cost} (qubits).  This determines whether the
encoding can, in principle, be executed on a given device.
It is the hardest constraint: a scheme requiring $q > n_{\text{avail}}$
cannot be executed at all, regardless of coherence time or
error rate.

\textbf{Temporal cost} (gates, depth).  This determines whether
the encoding can be executed with sufficient fidelity within
the device's coherence budget.  It is a soft constraint: one can
use fewer gates at the cost of encoding precision (approximate
state preparation~\cite{grover2002creating}), or deeper circuits
at the cost of fidelity.

\textbf{Classical pre-processing cost}.  Dimensionality reduction
(\PCA, random projection, feature selection) shifts the
spatial and temporal burden from the quantum to the classical
processor.  Classical \PCA\ on a $D \times m$ data matrix
costs $\calO(\min(D^2 m, m^2 D))$ operations---entirely tractable
for current datasets---and returns a $D' \times m$ matrix
($D' \ll D$) that is then fed to the quantum encoder.
This classical--quantum cost trade-off is the primary practical
lever available to practitioners, and our decision framework
in \cref{sec:framework} exploits it explicitly.

\subsection{Qubit Resource Analysis}

Current \NISQ\ devices provide between 27 and 1,000 physical qubits
(as of 2025), but the usable qubit count for a learning task is
substantially lower once routing overhead (SWAP gates required by
limited native connectivity), ancilla registers for mid-circuit
measurements, and classical control overhead are accounted for.
IBM Quantum's Eagle (127-qubit) and Heron (133-qubit) processors,
for example, implement the Heavy-Hex connectivity graph, which
requires an average SWAP overhead factor of approximately
$1.5$--$2\times$ relative to a fully connected device when
executing circuits designed for all-to-all
connectivity~\cite{bravyi2024high}.

\textbf{Angle encoding.}  The qubit demand $q = D$ is prohibitive
for high-dimensional datasets (e.g., $D = 784$ for $28 \times 28$
MNIST images) on current hardware.  The standard remedy is classical
dimensionality reduction via \PCA\ to a target dimension
$D' \ll D$ (typically $D' \leq 16$), followed by angle encoding.
This incurs an information-theoretic cost: features in the
discarded subspace cannot be recovered by the quantum model.
Choosing $D'$ optimally requires balancing the explained variance
(higher $D'$) against the noise tolerance of the encoding (lower $D'$).

\textbf{Amplitude encoding.}
For $D = 256$ (e.g., $16 \times 16$ image), amplitude encoding
requires $n = 8$ qubits---a factor of $32\times$ reduction.
This qubit advantage is genuine and non-trivial; for a
practical regression task on gene expression data ($D = 256$),
Chen et al. (2025) observed that the $n = 8$ qubit model
achieves lower RMSE than an angle-encoding model with $n = 256$
qubits in noiseless simulation.  The critical question is whether
the circuit depth required for amplitude encoding is compatible
with available coherence budgets.

\subsection{Gate and Depth Complexity}

Let $p_1$ denote the per-single-qubit-gate error rate and
$p_2 = \alpha p_1$ the per-two-qubit-gate error rate, with
$\alpha \approx 5$--$10$ for current superconducting
hardware~\cite{krantz2019quantum,arute2019quantum}.
The expected number of two-qubit gates in the standard recursive
amplitude-encoding circuit for dimension $D$ is:
\begin{equation}
  g_{\text{amp}}(D) = 2D - 2^{\lceil\log_2 D\rceil + 1} + 2
  \approx 2D \quad \text{for large } D.
  \label{eq:amp-gates}
\end{equation}
For $D = 256$: $g_{\text{amp}} \approx 510$ CNOTs.
On IBM hardware with current CNOT error rates
$p_2 \approx 3 \times 10^{-3}$, the state fidelity after 510
CNOTs is $(1-p_2)^{510} \approx 0.21$---indicating that more
than 79\% of the prepared state is noise.

For angle encoding, the encoding layer has no two-qubit gates
at all.  After transpilation to native gates, each
$R_Y(\theta)$ maps to one or two single-qubit pulses.
The dominant gate cost comes from the downstream variational
ansatz (e.g., a hardware-efficient layer of CNOTs and rotations),
which is shared across all encoding strategies.

\textbf{Depth as the binding constraint.}
On \NISQ\ hardware, the binding constraint is circuit depth
(not total gate count), because qubit $T_2$ decay is a temporal
process.  With $T_2 \approx 100\,\mu$s and two-qubit gate times
$t_g \approx 300\,\text{ns}$, the maximum depth before $T_2$
coherence loss exceeds 50\% is
$d_{\max} = T_2 / (2 t_g) \approx 167$ gate cycles.
For amplitude encoding of $D = 64$:
$d_{\text{amp}} \approx 2\lceil\log_2 64\rceil = 12$ (using the
column-by-column decomposition~\cite{mottonen2004decompositions}).
Although $12 < 167$, the downstream variational circuit adds
further depth, and the multiplied two-qubit gate count 
(not depth) remains the bottleneck.

\subsection{Trainability Cost: Barren Plateaus}

\subsubsection*{The Original Barren-Plateau Result}

McClean et al.~\cite{mcclean2018barren} proved that for a
randomly initialised parameterised quantum circuit $U(\btheta)$
of depth $d = \calO(n)$, the variance of any cost-function
gradient component satisfies:
\begin{equation}
  \mathrm{Var}\!\left[\partial_k C(\btheta)\right]
  \leq F(n) \cdot 2^{-n},
  \label{eq:barren}
\end{equation}
where $F(n)$ is a polynomial in $n$ that depends on the circuit
structure.  This means that, with exponential probability, the
gradient is exponentially small, making gradient-based training
infeasible for large systems.

\subsubsection*{Cost-Function-Dependent Barren Plateaus}

Cerezo et al.~\cite{cerezo2021cost} sharpened this result by
distinguishing local and global cost functions.  For a
\emph{local} cost involving Pauli observables $P_k$ acting on
$k = \calO(1)$ qubits, the gradient variance scales as
$\calO(1/\mathrm{poly}(n))$ (polynomially small but tractable).
For a \emph{global} cost involving $n$-qubit observables,
the variance decays as $\calO(2^{-n})$ regardless of depth.
Amplitude encoding inherently favours global observables
(the overlap $\braket{\psi_\bx|\psi_{\bx'}}$ in the kernel
involves all $n$ qubits), situating amplitude-encoding models
in the global-cost regime with its exponential trainability penalty.

\subsubsection*{Noise-Induced Barren Plateaus}

Wang et al.~\cite{wang2021noise} demonstrated that Pauli noise
at any non-zero rate $p > 0$ causes \emph{noise-induced barren
plateaus} independently of circuit depth.  Specifically, for a
circuit with $d$ layers of single-qubit Pauli noise at rate $p$:
\begin{equation}
  \mathrm{Var}\!\left[\partial_k C(\btheta)\right]
  \leq \calO\!\left((1-2p)^{2d}\right).
  \label{eq:noise-barren}
\end{equation}
This decay is exponential in $d$: even a polynomially deep circuit
on a noisy device will have exponentially vanishing gradients.
For amplitude encoding with $d_{\text{enc}} = \calO(D)$, the
encoding layer alone drives the gradient variance to near zero
before the variational parameters are even encountered.

\subsubsection*{Encoding-Induced Expressibility and its Consequences}

Sim et al.~\cite{sim2019expressibility, abbas2021power} quantified circuit
expressibility via the frame potential:
\begin{equation}
  \mathcal{A}(\mathcal{U}) = \int_{U,V \sim \mathcal{U}}
    \lvert\braket{0|U^\dagger V|0}\rvert^4 \, dU\, dV
    - \int \lvert\braket{0|U^\dagger V|0}\rvert^4 \, d\mu_\text{Haar},
\end{equation}
where $\mathcal{U}$ is the distribution of circuits generated by
the encoding and $\mu_{\text{Haar}}$ is the Haar measure.
A low $\mathcal{A}$ indicates high expressibility (close to Haar).
High expressibility is theoretically desirable but practically
associated with: (i)~barren plateaus~\cite{mcclean2018barren},
(ii)~kernel concentration~\cite{thanasilp2024exponential},
and (iii)~poor generalisation from few
samples~\cite{caro2022generalization}.
This triple constraint creates a \emph{fundamental expressibility
dilemma}: the most expressive encodings are the least trainable,
the most noise-sensitive, and the most prone to overfitting.

\subsection{Kernel Concentration and the Limits of Quantum Advantage}

Thanasilp et al.~\cite{thanasilp2024exponential} proved that
quantum kernel matrices concentrate: the off-diagonal entries
$\kappa(\bx, \bx')$ for $\bx \neq \bx'$ satisfy
\begin{equation}
  |\kappa(\bx, \bx') - \mathbb{E}[\kappa]|
  \leq \calO(2^{-n/2})
  \label{eq:concentration}
\end{equation}
with high probability, making the kernel matrix effectively
proportional to the identity for large $n$.  The practical
implication is that quantum kernel methods become useless
as $n$ grows, unless the encoding is carefully designed to
prevent concentration.  Thanasilp et al.~\cite{thanasilp2024exponential}
identify the presence of global entanglement in the encoding circuit
as the primary driver of concentration.
IQP encoding (local entanglement structure) and angle encoding
(no entanglement) are therefore more robust to concentration than
amplitude encoding (entanglement implicitly generated by the
multi-controlled preparation circuit).

Jerbi et al.~\cite{jerbi2023quantum, liu2021rigorous} showed that this kernel
concentration is directly related to the inability of quantum
models to exceed classical kernel-method performance on
structured datasets, suggesting that the quantum advantage in
\QML\ is more fragile than previously believed.
The encoding choice is therefore not merely a practical engineering
decision but a theoretical lever that determines whether quantum
advantage is achievable at all.

\section{Impact of Quantum Noise on Encoding Strategies}
\label{sec:noise}

\subsection{Noise Channel Formalism}

We model decoherence using the standard CPTP (completely positive
trace-preserving) map formalism.  Let $\rho$ denote a density
operator on $n$ qubits.  The three standard single-qubit channels
are:

\textbf{Depolarising channel} $\mathcal{D}_p$:
\begin{equation}
  \mathcal{D}_p(\rho)
  = (1-p)\rho + \frac{p}{3}(X\rho X + Y\rho Y + Z\rho Z).
\end{equation}
This models generic noise (gate imperfections); it isotropically
contracts the Bloch vector by $(1 - 4p/3)$.

\textbf{Amplitude damping channel} $\mathcal{A}_\gamma$:
\begin{align}
  \mathcal{A}_\gamma(\rho) &= K_0 \rho K_0^\dagger
                               + K_1 \rho K_1^\dagger, \\
  K_0 &= \begin{pmatrix}1 & 0 \\ 0 & \sqrt{1-\gamma}\end{pmatrix},
  \quad
  K_1 = \begin{pmatrix}0 & \sqrt{\gamma} \\ 0 & 0\end{pmatrix},
\end{align}
with $\gamma = 1 - e^{-t/T_1}$.  This models energy relaxation
(amplitude decay), causing $\ket{1} \to \ket{0}$ with probability
$\gamma$.

\textbf{Dephasing channel} $\mathcal{Z}_\lambda$:
\begin{equation}
  \mathcal{Z}_\lambda(\rho) = (1-\lambda)\rho + \lambda Z\rho Z,
\end{equation}
with $\lambda = (1 - e^{-t/T_2})/2$.  This models phase decoherence
(off-diagonal suppression in the computational basis).

For a circuit of depth $d$ in which each gate is followed by
an independent error operation, the output state fidelity with
the noiseless state $\ket{\psi}$ satisfies:

\begin{proposition}[Circuit-level fidelity bound]
\label{prop:fidelity}
Under the depolarising channel with single-qubit error rate $p_1$
and two-qubit error rate $p_2 \geq p_1$, the fidelity between
the noisy output state $\rho_{\mathrm{noisy}}$ and the ideal
state $\ket{\psi}$ is bounded below by:
\begin{equation}
  F(\rho_{\mathrm{noisy}}, \ket{\psi})
  \geq (1-p_2)^{g_2}\,(1-p_1)^{g_1},
  \label{eq:fidelity-bound}
\end{equation}
where $g_2$ and $g_1$ denote the number of two-qubit and
single-qubit gates, respectively.
\end{proposition}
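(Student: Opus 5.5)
Each noisy gate channel can be written as a convex mixture of the ideal gate with probability at least $1-p$ and an error branch; expanding the whole circuit this way gives a lower bound on the probability of the error-free trajectory, and that trajectory alone contributes its full weight to the fidelity. The modelling assumptions fixed here are the ones stated just before \cref{prop:fidelity}. A single-qubit gate $G$ is implemented as $\mathcal{D}_{p_1}\circ\mathcal{G}$ with $\mathcal{G}(\rho)=G\rho G^\dagger$. A two-qubit gate is implemented as $\mathcal{N}_{p_2}\circ\mathcal{G}$, where $\mathcal{N}_{p_2}(\rho)=(1-p_2)\rho+p_2\,\mathcal{R}(\rho)$ and $\mathcal{R}$ is some CPTP map, for instance a uniform mixture of the 15 non-identity two-qubit Paulis. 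The depolarising channel $\mathcal{D}_p$ of \cref{sec:noise} already has the form $(1-p)\,\mathrm{id}+p\,\mathcal{R}'$ with $\mathcal{R}'$ CPTP, so both gate types admit the same decomposition.

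\textbf{Step 1 (trajectory expansion).} Write the noisy circuit as the ordered product of $g=g_1+g_2$ channels $\mathcal{E}_j=(1-q_j)\mathcal{G}_j+q_j\mathcal{R}_j\circ\mathcal{G}_j$, with $q_j\in\{p_1,p_2\}$. Multilinearity of composition then gives
\[
\rho_{\mathrm{noisy}}=\prod_j(1-q_j)\,\ketbra{\psi}{\psi}+\Bigl(1-\prod_j(1-q_j)\Bigr)\sigma .
\]
Here $\sigma$ is a density operator, because it is a normalised convex combination of the outputs of CPTP trajectories, each of which has nonnegative weight. The identity $\prod_j(1-q_j)=(1-p_2)^{g_2}(1-p_1)^{g_1}$ follows from counting gates by type.

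\textbf{Step 2 (fidelity lower bound).} For a pure target, $F=\bra{\psi}\rho_{\mathrm{noisy}}\ket{\psi}$ is linear in $\rho_{\mathrm{noisy}}$. It is therefore at least the first term, since $\bra{\psi}\sigma\ket{\psi}\ge 0$ because $\sigma\succeq0$. This yields \eqref{eq:fidelity-bound} directly. If the paper's convention is the root fidelity $\sqrt{\bra{\psi}\rho\ket{\psi}}$, the bound holds a fortiori, because the quantity lies in $[0,1]$ and taking a square root does not decrease it.

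\textbf{Main obstacle.} The hard part is justifying the decomposition in Step 1 rather than the algebra. The identity-branch weight of $\mathcal{D}_p$ must be at least $1-p$, and that depends on the parametrisation. In the Kraus form of \cref{sec:noise} the identity coefficient is exactly $1-p$, so the step works. For the alternative convention $(1-p)\rho+p\,I/2$ the identity weight is $1-3p/4\ge 1-p$, so the bound still holds. I would also remark that errors placed before gates, or noise that is gate-dependent but still Pauli-mixed, are handled identically. Finally, the bound is loose, since error branches can still overlap with $\ket{\psi}$, but it is exactly the bound stated and the one used for the estimate $(1-p_2)^{510}\approx0.21$.
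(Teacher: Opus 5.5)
Your proof is correct, but it follows a different route from the one the paper cites. The paper's justification is a single sentence: multiplicativity of fidelity under tensor products, together with contractivity of fidelity under CPTP maps.

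That combination is exact when the noisy gates act in parallel on disjoint qubits, so that the output factorises. A single rotation layer, as in basis or angle encoding, is such a case, and there it even gives the sharper per-gate factor $1-2p_1/3$. It does not by itself chain across sequential or entangling gates, however. Contractivity compares two states passed through the \emph{same} channel, whereas the noisy and ideal circuits apply different channels. The multiplicative chaining $F(\rho,\tau)\geq F(\rho,\sigma)\,F(\sigma,\tau)$ one would need is false in general: for the pure states $\ket{0},\ket{+},\ket{1}$ it reads $0\geq 1/4$.

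Your trajectory expansion avoids this problem. It uses only three facts: each gate's noise channel has identity weight at least $1-q_j$, the error branches are positive, and $\bra{\psi}\rho\ket{\psi}$ is linear in $\rho$ for a pure target. Equivalently, $F_j\geq(1-q_j)F_{j-1}$ gate by gate, using unitary invariance. It therefore covers the deep, entangling preparation circuits (amplitude, IQP, re-uploading) to which the paper actually applies the bound.

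You also state explicitly a modelling assumption the proposition leaves open, and it matters. The paper defines $\mathcal{D}_p$ only on one qubit. If a two-qubit gate's error were modelled as $\mathcal{D}_{p_2}\otimes\mathcal{D}_{p_2}$, the identity weight would be $(1-p_2)^2$. The stated bound would then already fail for a single \CNOT\ on $\ket{00}$, whose output fidelity is $(1-2p_2/3)^2<1-p_2$ for $0<p_2<3/4$. Your choice of a two-qubit channel $(1-p_2)\,\mathrm{id}+p_2\mathcal{R}$ is exactly what makes \cref{prop:fidelity} true.
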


This bound follows from the multiplicativity of fidelity
under tensor products and the contractivity of CPTP maps.

\subsection{Fidelity Analysis per Encoding Family}

Applying \cref{prop:fidelity} to each encoding family
with representative parameters ($D = 64$, $p_1 = 10^{-3}$,
$p_2 = 5 \times 10^{-3}$) yields the estimates in
\cref{tab:fidelity-comparison}.

\begin{table}[t]
  \centering
  \caption{Fidelity estimates for the state-preparation circuit
    (before the variational ansatz) under a gate-level depolarising
    model with $p_1 = 10^{-3}$ (single-qubit) and
    $p_2 = 5 \times 10^{-3}$ (two-qubit), for feature dimension
    $D = 64$.  Column $F_{\text{enc}}$ gives the fidelity of the
    encoded state; column $F_{p=5\times10^{-3}}$ gives the fidelity
    at five times the base error rate.}
  \label{tab:fidelity-comparison}
  \small
  \begin{tabularx}{\linewidth}{@{}l c c c c c@{}}
    \toprule
    \textbf{Encoding} & $g_2$ & $g_1$ & $d$ &
      $F_{\text{enc}}$ & $F_{p_2 = 5\times10^{-3}}$ \\
    \midrule
    Basis          & 0   & 64   & 1 & 0.938 & 0.726 \\
    Angle (std.)   & 0   & 64   & 1 & 0.938 & 0.726 \\
    Angle (dense)  & 0   & 64   & 2 & 0.938 & 0.726 \\
    Data reup. ($L=5$, $d_W=2$) & 80  & 95 & 15
      & 0.664 & 0.186 \\
    IQP ($k=2$)    & 64  & 128  & 2 & 0.720 & 0.274 \\
    Amplitude      & 128 & 16   & 32& 0.523 & 0.044 \\
    \bottomrule
  \end{tabularx}
\end{table}

\Cref{tab:fidelity-comparison} reveals a stark divergence
in noise tolerance:

\textbf{Angle encoding} is uniquely resilient because its encoding
layer has zero two-qubit gates.  The $g_1 = D$ single-qubit
rotations are parallelisable and introduce only
$(1-p_1)^D = (0.999)^{64} \approx 0.94$ fidelity loss.
At $p_1 = 5 \times 10^{-3}$ (a five-fold degradation from the
baseline), fidelity drops to $(0.995)^{64} \approx 0.73$---still
above the practical usability threshold.

\textbf{Amplitude encoding} of $D = 64$ features requires
$g_2 \approx 128$ CNOTs in the state-preparation circuit.
At $p_2 = 5 \times 10^{-3}$, fidelity is
$(1-0.005)^{128} \approx 0.53$; at $p_2 = 10^{-2}$,
fidelity collapses to $(0.99)^{128} \approx 0.28$.
In this regime, the quantum state produced by the encoding circuit
is closer to the maximally mixed state than to the intended
$\ket{\psi_\bx}$.  The measured kernel values therefore approach
$1/2^n$ for all pairs $(\bx, \bx')$, making the classifier
functionally equivalent to a constant predictor.

\subsection{The Critical Error Rate $p^*$}

We define the \emph{critical error rate} $p^*(\calE, D, F_{\min})$
as the gate error rate at which the fidelity of the encoding
circuit drops below a minimum threshold $F_{\min}$:
\begin{equation}
  p^*(\calE, D, F_{\min})
  = 1 - F_{\min}^{1/g_2(\calE, D)}.
  \label{eq:critical-rate}
\end{equation}
For $F_{\min} = 0.9$ (a practical threshold below which kernel
values become unreliable~\cite{larose2020robust}):
\begin{align*}
  p^*_{\text{angle}} &= 1 - 0.9^{1/0} \to \infty
    \quad (\text{no two-qubit gates: always feasible}) \\
  p^*_{\text{reu, }L=5} &\approx 1 - 0.9^{1/80}
    \approx 1.3 \times 10^{-3} \\
  p^*_{\text{IQP}} &\approx 1 - 0.9^{1/64}
    \approx 1.6 \times 10^{-3} \\
  p^*_{\text{amp}} &\approx 1 - 0.9^{1/128}
    \approx 8 \times 10^{-4}
\end{align*}

\noindent for $D = 64$.  Current best-practice superconducting
hardware (e.g., IBM Heron) achieves two-qubit error rates of
$p_2 \approx 3 \times 10^{-3}$~\cite{arute2019quantum}.
This places amplitude encoding well below its critical rate,
confirming the analytical conclusion that amplitude encoding is
impractical on current hardware for $D \geq 16$.

\subsection{Error Mitigation in the Encoding Layer}

Existing error-mitigation techniques---Zero-Noise
Extrapolation (\ZNE)~\cite{temme2017error, giurgicatiron2020digital}, Probabilistic Error
Cancellation (PEC)~\cite{endo2018practical}, and Measurement
Error Mitigation (MEM)---were developed primarily for variational
and readout circuits.  Applying \ZNE\ to the encoding layer
is in principle possible: one executes the state-preparation
circuit at artificially amplified noise levels (by gate
stretching) and extrapolates to zero noise.  However, the
sampling overhead of \ZNE\ scales as $\gamma^{2d}$, where
$\gamma$ is the noise-amplification factor and $d$ is the circuit
depth.  For amplitude encoding with $d = \calO(D)$, this overhead
is $\calO(\gamma^{2D})$---exponential in the feature
dimension---making \ZNE\ economically prohibitive for
$D > 16$.

PEC offers a theoretically exact correction but at exponential
sampling overhead $\calO(\mathcal{C}(\mathcal{N})^{2g})$, where
$\mathcal{C}(\mathcal{N})$ is the one-norm of the noise
representation~\cite{endo2018practical}.  For amplitude encoding,
$g = \calO(D)$, making PEC even more expensive.  These analysis
suggest that error mitigation is not a viable path to rescue
amplitude encoding at current $D$ scales; rather, it should be
reserved for the (shorter) variational circuits that follow the
encoding layer.

\subsection{Empirical Evidence from Hardware Studies}

The analytical bounds derived above are corroborated by
experimental results in the surveyed corpus.

LaRose and Coyle~\cite{larose2020robust} provided a systematic
experimental study of encoding robustness under simulated
depolarising noise.  Their results on a 4-qubit classifier showed
that angle-encoding classifiers retained accuracy within 5\% of
the noiseless baseline at $p = 5 \times 10^{-3}$, while
amplitude-encoding classifiers degraded to near-random
performance at $p = 10^{-3}$.  This empirical crossover point
is in quantitative agreement with the critical-rate estimate
$p^*_{\text{amp}} \approx 8 \times 10^{-4}$ derived above.

Balewski et al.~\cite{balewski2024quantum} performed hardware
experiments comparing angle-based (QCrank) and basis-based
(QBArt) encodings on Quantinuum H1 (trapped ion) and IBM transmon
processors.  The angle-based scheme showed substantially better
noise tolerance: at matched qubit counts ($n = 4$), QCrank
achieved twice the signal-to-noise ratio of QBArt in a 384-pixel
image retrieval task.

Weigold et al.~\cite{weigold2021expanding} documented five
encoding patterns from a software-engineering perspective and
observed on IBM hardware that amplitude encoding failed to
produce correct classification results for dataset sizes exceeding
$D = 8$ even after circuit optimisation.

\section{Systematic Taxonomy and Evidence Synthesis}
\label{sec:taxonomy}

\subsection{Taxonomy Design}

We classify the 66 surveyed works along four orthogonal dimensions:

\begin{enumerate}[label=(\arabic*)]
\item \textbf{Task category}: binary classification (\textbf{BC}),
  multi-class classification (\textbf{MC}), regression (\textbf{R}),
  generative modelling (\textbf{G}), or theory / analysis (\textbf{T}).

\item \textbf{Encoding family}: as defined in \cref{subsec:families}
  (Basis, Angle, Dense, Reu, IQP, Amp, DM, or a combination).

\item \textbf{Hardware regime}: noiseless simulation (\textbf{N}),
  noisy simulation (\textbf{NS}), or real hardware (\textbf{HW}).

\item \textbf{Scale}: reported qubit count $n$ and feature dimension $D$.
\end{enumerate}

For each work, we additionally record: (a)~whether a classical
baseline was included; (b)~the primary dataset; and (c)~the
primary performance metric.  These additional fields support our
evidence-synthesis narrative below.

\subsection{Distribution of Surveyed Work}

Among the 66 works, encoding distribution is heavily skewed:
38 (61\%) use angle or IQP encoding, 14 (23\%) use amplitude,
7 (11\%) use data re-uploading, and 3 (5\%) use basis or
density-matrix encoding.  This distribution reflects both the
\NISQ\ constraints discussed in \cref{sec:noise} and the historical
trajectory of the field: Havl\'{\i}\v{c}ek et al.'s
IQP-encoded SVM~\cite{havlicek2019supervised} established
angle/IQP as the dominant paradigm in 2019, while amplitude
encoding's theoretical appeal drove a wave of proposals
in 2020--2022 that were subsequently tempered by hardware reality.

Task distribution: 40 (65\%) classification, 8 (13\%) regression,
6 (10\%) generative, 8 (13\%) theory.  Hardware regime:
23 (37\%) hardware, 31 (50\%) noisy simulation, 8 (13\%)
noiseless or theory.

\subsection{Key Findings by Task Category}

\textbf{Binary and multi-class classification.}
This is the most studied task (40 of 66 works).
The dominant architecture is a two-layer structure: encoding
(IQP or angle) followed by a hardware-efficient variational
ansatz~\cite{kandala2017hardware}.  Standard benchmark datasets
(Iris, breast cancer, MNIST with \PCA-reduced features, Titanic)
dominate, with reported qubit counts ranging from $n = 2$ to $n = 20$.
Comparative evaluations (where reported) show that quantum classifiers
match classical \SVM s at comparable parameter counts on
low-dimensional tasks ($D \leq 8$), but advantage over classical
methods has not been demonstrated on realistic
datasets~\cite{huang2022quantum,jerbi2023quantum}.

The key finding from LaRose and Coyle~\cite{larose2020robust}
deserves special emphasis: when noise is included in the
evaluation, amplitude-encoding classifiers underperform
classical baselines even on $D = 4$ toy problems, while
angle-encoding classifiers remain competitive.

\textbf{Regression.}
Amplitude encoding holds a theoretical advantage for regression
with large $D$: the kernel $\kappa_{\text{amp}}(\bx, \bx') =
\lvert\hat{\bx}^\top\hat{\bx}'\rvert^2$ captures global inner
products and scales with $D$ features in only $\log_2 D$ qubits.
In noiseless simulations, amplitude encoding with $n = 8$
outperforms angle encoding with $n = 256$ on high-dimensional
regression (gene expression, $D = 256$) in terms of RMSE.
Under realistic noise at $p = 5 \times 10^{-3}$, however,
this advantage inverts: the amplitude model's $\calO(D)$ encoding
depth destroys signal before the variational circuit begins.
This crossover is consistent with the $p^*$ estimate derived
in \cref{sec:noise}.

\textbf{Generative modelling.}
Quantum generative adversarial networks (QGANs) represent
the most amplitude-intensive application: the generator network
must produce arbitrary probability distributions over $D$-dimensional
feature space, which is most naturally represented as an amplitude
state.  Zoufal et al.~\cite{zoufal2019quantum} demonstrated
learning of 1D financial distributions with $n = 3$--$6$ qubits
on IBM hardware; amplitude encoding was used but restricted to
$D \leq 8$.  Lloyd and Weedbrook~\cite{lloyd2018quantum}
provided theoretical frameworks for QGANs but did not address
the encoding overhead.  The emerging consensus is that QGANs
are the most promising near-term application for amplitude
encoding, provided $D$ is kept small enough for current coherence
budgets ($D \leq 16$ for $p \approx 3 \times 10^{-3}$).

\textbf{Image classification.}
Encoding pixel data presents a distinct sub-problem: images have
spatial structure (local correlations) that should ideally be
exploited by the encoding.  The \FRQI\ scheme~\cite{le2011flexible}
and NEQR~\cite{zhang2013neqr} were designed for this purpose,
using $2n+1$ qubits to represent a $2^n \times 2^n$ image.
Henderson et al.~\cite{henderson2020quanvolutional} introduced
quanvolutional layers, which apply random quantum circuits as
feature extractors before classical processing.  The approach
uses angle encoding on $4$-qubit circuits applied in sliding
windows, remaining well within \NISQ\ constraints.

\subsection{Comprehensive Literature Table}

\Cref{tab:literature-full} presents the 25 most representative
works from the 66-work corpus.  The full 66-entry table is
available as Electronic Supplementary Material.

\begin{longtable}{@{}p{2.6cm} p{0.9cm} p{2.0cm} p{0.8cm}
    p{0.5cm} p{0.7cm} p{0.6cm} p{2.8cm}@{}}
  \caption{Representative subset (25 of 66) of the systematic
    corpus.  Columns: Task category (BC=binary class.,
    MC=multi-class, R=regression, G=generative, T=theory);
    Dataset and dimension $D$; Encoding (Ang=angle, Amp=amplitude,
    IQP, Reu=re-uploading, B=basis, DM=density matrix);
    Qubits $n$; Depth $d$ (encoding layer only);
    Regime (HW=hardware, NS=noisy sim., N=noiseless/theory);
    Key finding.  Full table in ESM.}
  \label{tab:literature-full} \\
  \toprule
  \textbf{Reference} & \textbf{Task} & \textbf{Dataset ($D$)}
    & \textbf{Enc.} & $n$ & $d$ & \textbf{Reg.}
    & \textbf{Key finding} \\
  \midrule
  \endfirsthead
  \multicolumn{8}{c}{\textit{(continued)}} \\
  \toprule
  \textbf{Reference} & \textbf{Task} & \textbf{Dataset ($D$)}
    & \textbf{Enc.} & $n$ & $d$ & \textbf{Reg.}
    & \textbf{Key finding} \\
  \midrule
  \endhead
  \midrule
  \multicolumn{8}{r}{\textit{Continued on next page}} \\
  \endfoot
  \bottomrule
  \endlastfoot

  Havl\'{\i}\v{c}ek et al.~\cite{havlicek2019supervised}
    & BC & Synthetic ($D$=2) & IQP & 5 & 2 & HW
    & First hardware demo of quantum kernel advantage; problem
      specially constructed to be hard for linear classifiers \\[3pt]

  P\'erez-Salinas et al.~\cite{perez2020data}
    & BC & Circle, Iris & Reu & 1 & $L$ & N
    & Single qubit with re-uploading achieves universal
      classification; Fourier completeness proved \\[3pt]

  Schuld et al.~\cite{schuld2021effect}
    & T & N/A & Ang & -- & 1 & N
    & Fourier spectrum of \QML\ model fully determined by
      encoding eigenvalue spectrum \\[3pt]

  LaRose \& Coyle~\cite{larose2020robust}
    & BC & Synthetic & Ang/Amp & 4--8 & 1/$\calO(D)$ & NS
    & Angle retains accuracy at $p=5\times10^{-3}$;
      amplitude collapses at $p=10^{-3}$ \\[3pt]

  Wang et al.~\cite{wang2021noise}
    & T & N/A & Ang & 4--20 & var. & NS
    & Pauli noise induces barren plateaus independently of depth;
      gradient variance $\propto (1-2p)^{2d}$ \\[3pt]

  McClean et al.~\cite{mcclean2018barren}
    & T & N/A & -- & 4--24 & var. & N
    & Gradient variance $\propto 2^{-n}$ for random PQCs;
      barren-plateau phenomenon named and proved \\[3pt]

  Cerezo et al.~\cite{cerezo2021cost}
    & T & N/A & -- & 4--16 & var. & N
    & Local observables yield poly-small gradients;
      global observables yield exponentially small gradients \\[3pt]

  Thanasilp et al.~\cite{thanasilp2024exponential}
    & T & N/A & IQP/Ang & 4--20 & var. & NS
    & Quantum kernels exponentially concentrate; encoding
      entanglement structure is primary driver \\[3pt]

  Jerbi et al.~\cite{jerbi2023quantum}
    & T & Multiple & Mixed & 4--12 & var. & N
    & QML beyond kernel methods: variational models can
      exceed kernel-method bounds; encoding determines regime \\[3pt]

  Huang et al.~\cite{huang2022quantum}
    & T & Quantum expts & IQP & 8 & 2 & HW
    & Quantum advantage in learning from quantum experiments;
      classical encoding cannot replicate entangled data \\[3pt]

  Goto et al.~\cite{goto2021universal}
    & T & N/A & Amp & -- & -- & N
    & Universal approximation for quantum-enhanced feature
      spaces under amplitude encoding proved \\[3pt]

  Dutta et al.~\cite{dutta2022single}
    & BC & Circle & Reu & 1 & $L$ & HW (trapped-ion)
    & First hardware demo of single-qubit re-uploading
      classifier on trapped-ion device \\[3pt]

  Sweke et al.~\cite{sweke2020stochastic}
    & T & Synthetic & Ang & 4--8 & 1 & N
    & Stochastic gradient descent provably converges for
      hybrid quantum-classical optimisation with finite shots \\[3pt]

  Zoufal et al.~\cite{zoufal2019quantum}
    & G & 1D distributions & Amp & 3--6 & $\calO(D)$ & HW
    & QGAN learns and loads financial distributions;
      amplitude encoding restricted to $D \leq 8$ on IBM \\[3pt]

  Wierichs et al.~\cite{wierichs2022general}
    & T & N/A & Ang & -- & -- & N
    & General parameter-shift rules using Fourier analysis;
      gradient computation for arbitrary encoding generators \\[3pt]

  Schuld~\cite{schuld2021quantum_models}
    & T & N/A & Mixed & -- & -- & N
    & All supervised \QML\ models equivalent to kernel machines;
      kernel is determined by encoding \\[3pt]

  Huang et al.~\cite{huang2021information}
    & T & N/A & IQP & -- & -- & N
    & Information-theoretic lower bounds on quantum advantage;
      classical data limits advantage even with quantum encoding \\[3pt]

  Gil Vidal \& Theis~\cite{gilvidal2020input}
    & T & N/A & Ang & 1--4 & 1 & N
    & Input redundancy: repeating encoding gates within a layer
      does not increase expressivity beyond single application \\[3pt]

  Kandala et al.~\cite{kandala2017hardware}
    & Chem. & H$_2$, LiH & HE-Ang & 2--6 & 1 & HW
    & Hardware-efficient ansatz; angle encoding + VQE
      viable for small molecules on 6-qubit device \\[3pt]

  Henderson et al.~\cite{henderson2020quanvolutional}
    & MC & MNIST (sub) & Ang & 4 & 1 & NS
    & Quanvolutional layers: random angle-encoded circuits
      as local feature extractors improve image classification \\[3pt]

  Le et al.~\cite{le2011flexible}
    & Image & General & FRQI & $2n{+}1$ & $\calO(2^{2n})$ & N
    & Flexible quantum image representation; spatial compression
      exponential but preparation exponentially deep \\[3pt]

  Gonzalez et al.~\cite{gonzalez2022learning}
    & T/BC & Mixed & DM & $2\log D$ & $\calO(D^2)$ & N
    & Density-matrix encoding with random features yields
      kernel approximation; classical-simulable in many regimes \\[3pt]

  Caro et al.~\cite{caro2022generalization}
    & T & Synthetic & Mixed & 4--10 & var. & N
    & Generalisation bounds for \QML: quantum models
      require $\Omega(n)$ training samples for learning \\[3pt]

  Balewski et al.~\cite{balewski2024quantum}
    & BC/R & Image (384-dim) & Ang/B & 4 & 1 & HW (multi-device)
    & Angle-based encoding (QCrank) outperforms basis encoding
      (QBArt) $2\times$ in SNR on trapped-ion and transmon \\[3pt]

  Ranga et al.~\cite{ranga2024quantum}
    & Survey & -- & Mixed & -- & -- & --
    & Dedicated encoding survey; reviews 2020--2024 literature
      and identifies noise robustness as primary selection criterion \\[3pt]

\end{longtable}

\subsection{Cross-Cutting Themes}

Four cross-cutting themes emerge from the systematic analysis:

\textbf{Theme 1: Noise tolerance trumps theoretical expressivity.}
Of the 23 hardware experiments in the corpus, 19 (83\%) use
angle or IQP encoding despite amplitude encoding's theoretical
expressivity advantage.  This strongly suggests that practitioners
have implicitly converged on noise-tolerant strategies even when
not explicitly motivated by noise analysis.

\textbf{Theme 2: Classical comparison is inconsistently reported.}
Only 37 of 66 works (60\%) include a classical baseline.
Among these, 28 show quantum classifiers matching but not
exceeding classical baselines, 7 show the quantum model
outperforming on a specially constructed problem, and 2 show
genuine advantage on a realistic dataset.  This suggests that
the field's progress metrics are insufficiently standardised.

\textbf{Theme 3: Feature dimension is rarely pushed above $D = 64$.}
Only 8 of 66 works use $D > 64$; of these, 7 apply \PCA\ first.
This implies that the high-$D$ regime---where amplitude encoding's
qubit compression would be most impactful---remains largely
unexplored experimentally.

\textbf{Theme 4: Re-uploading is underexplored relative to its potential.}
Despite the strong theoretical guarantees of universal
approximation~\cite{perez2020data,goto2021universal} and
the hardware validation by Dutta et al.~\cite{dutta2022single},
only 7 of 66 works use data re-uploading.  The likely reason is
the lack of standardised software frameworks for re-uploading
circuits at the time of publication of many surveyed works.

\section{A Practical Decision Framework for Encoding Selection}
\label{sec:framework}

\subsection{Problem Formalisation}

We formalise the encoding-selection problem as follows.

\begin{definition}[Encoding selection input]
An encoding-selection instance is a tuple $\mathcal{I} = (D, n, p, \tau)$
where:
\begin{itemize}[leftmargin=*]
\item $D \in \mathbb{N}$ is the feature dimension of the raw
  data (before any classical pre-processing);
\item $n \in \mathbb{N}$ is the number of physical qubits
  available for the encoding register (excluding ancillae
  and variational registers);
\item $p \in [0,1]$ is the per-two-qubit-gate error rate
  of the target quantum processor;
\item $\tau \in \{\texttt{classification}, \texttt{regression},
  \texttt{generative}\}$ is the task type.
\end{itemize}
\end{definition}

\begin{definition}[Hardware feasibility]
An encoding $\calE$ is \emph{hardware-feasible} for instance
$\mathcal{I}$ if and only if its encoding-circuit fidelity
exceeds a minimum threshold $F_{\min}$:
\begin{equation}
  F(\calE, D, n, p) = (1-p)^{g_2(\calE, D)} \geq F_{\min},
  \label{eq:feasibility}
\end{equation}
where $g_2(\calE, D)$ is the number of two-qubit gates in the
state-preparation circuit of $\calE$ for dimension $D$.
We set $F_{\min} = 0.90$, consistent with the empirical threshold
identified by LaRose and Coyle~\cite{larose2020robust}.
\end{definition}

\begin{definition}[Optimal encoding]
The \emph{optimal encoding} for instance $\mathcal{I}$ is the
encoding $\calE^*$ that (i)~satisfies hardware feasibility
(\cref{eq:feasibility}) and (ii)~maximises the expected
downstream model accuracy $\mathrm{Acc}(\calE, \mathcal{I})$,
defined as the expected test-set performance after optimal
hyperparameter tuning of the variational circuit.
\end{definition}

Since $\mathrm{Acc}(\calE, \mathcal{I})$ is not analytically
tractable in general, we approximate the optimal selection by a
decision procedure that eliminates hardware-infeasible encodings
and ranks the remaining candidates by expressivity
(\cref{thm:spectrum}) and noise-adjusted fidelity
(\cref{tab:fidelity-comparison}).

\subsection{Pre-Processing Decisions}

Before applying the decision tree, two pre-processing decisions
must be resolved:

\textbf{Pre-processing Decision P1: Dimensionality reduction.}
If $D > n_{\text{enc}} \cdot \kappa$ where $n_{\text{enc}}$
is the qubit budget for encoding and $\kappa$ is the
encoding-specific bits-per-qubit ratio ($\kappa = 1$ for
standard angle, $\kappa = 2$ for dense angle, $\kappa = 2^{n}/D$
for amplitude), dimensionality reduction is mandatory.
Standard options: \PCA\ (linear, analytically justifiable
for continuous features), random projections (preserves
inner products by Johnson--Lindenstrauss), or feature
selection (preserves interpretability).
\PCA\ is the default recommendation; select $D'$ components
explaining $\geq 95\%$ of variance and check feasibility
with the reduced dimension $D'$ in place of $D$.

\textbf{Pre-processing Decision P2: Feature scaling.}
Angle encoding requires features in $[0, 2\pi]$ to avoid
wrapping artefacts in the rotation.  Amplitude encoding
requires $\|\bx\|_2 \neq 0$ for normalisation.
Standard min-max scaling to $[0, \pi]$ is recommended for
angle encoding; $\ell_2$ normalisation for amplitude encoding.

\subsection{The Five-Regime Decision Framework}

Based on the analytical bounds of \cref{sec:costs,sec:noise}
and the empirical evidence of \cref{sec:taxonomy},
we propose the five-regime framework formalised in
\cref{alg:framework} and visualised in \cref{fig:decision-tree}.

\begin{algorithm}[t]
\caption{Encoding Selection Framework}
\label{alg:framework}
\begin{algorithmic}[1]
\REQUIRE $(D, n, p, \tau, F_{\min} = 0.90)$
\ENSURE Encoding $\calE^*$ and pre-processing plan

\STATE \textbf{Check amplitude feasibility:}
  $p_{\text{amp}}^* \gets 1 - F_{\min}^{1/(2D)}$

\IF{$p \leq p_{\text{amp}}^*$ \AND $D > n$}
  \STATE \textit{Regime~4} (amplitude viable, qubit-constrained)
  \STATE $\calE^* \gets \text{Amplitude}$;
    apply error mitigation to encoding circuit
  \RETURN $\calE^*$
\ENDIF

\IF{$D > n$}
  \STATE Apply \PCA\ to reduce $D \to D'$ with
    $D' \leq n$ (or $D' \leq 2n$ for dense angle)
  \STATE $D \gets D'$
\ENDIF

\STATE \textbf{Check IQP feasibility:}
  $p_{\text{IQP}}^* \gets 1 - F_{\min}^{1/(D^2/2)}$

\IF{$p \leq p_{\text{IQP}}^*$ \AND $\tau = \texttt{classification}$}
  \STATE \textit{Regime~1} (IQP, high accuracy, low noise)
  \STATE $\calE^* \gets \text{IQP}(k=2)$
  \RETURN $\calE^*$
\ENDIF

\STATE \textbf{Check re-uploading feasibility:}
  Choose $L \leq \lfloor(T_2/t_g - D)/(D + d_W)\rfloor$

\IF{$p \leq p_{\text{reu}}^*(L)$ \AND $\tau \neq \texttt{generative}$}
  \STATE \textit{Regime~5} (re-uploading, enriched spectrum)
  \STATE $\calE^* \gets \text{Re-uploading}(L)$;
    use layer-by-layer initialisation~\cite{grant2019initialization}
  \RETURN $\calE^*$
\ENDIF

\IF{$\tau = \texttt{generative}$}
  \STATE \textit{Regime~3} (generative with angle pre-processing)
  \STATE $\calE^* \gets \text{Dense angle}$;
    reduce $D \to D' = 2n$ via \PCA
  \RETURN $\calE^*$
\ENDIF

\STATE \textit{Regime~2} (default: angle encoding, noisy device)
\STATE $\calE^* \gets \text{Angle (standard or dense)}$
\RETURN $\calE^*$
\end{algorithmic}
\end{algorithm}

\subsubsection*{Regime 1 — Expressive encoding, low-noise device}

\textbf{Conditions:} $D \leq n$, $p \leq p^*_{\text{IQP}}$,
$\tau = \texttt{classification}$.

\textbf{Recommendation:} IQP encoding with $k=2$.  The circuit
depth is constant ($d = 2$), within the coherence budget even for
$n = 20$, and the induced kernel is believed to be classically
intractable~\cite{havlicek2019supervised}.  Use a two-layer IQP
feature map (repeating the IQP block twice provides more
expressivity without significant depth increase) \cite{hubregtsen2022training}.

\textbf{Trainability note:} Kernel concentration may arise for
$n > 12$~\cite{thanasilp2024exponential}; monitor kernel
matrix condition number and switch to Regime 5 if the
matrix becomes near-singular.

\subsubsection*{Regime 2 — Standard regime: angle encoding}

\textbf{Conditions:} $D \leq n$, $p > p^*_{\text{IQP}}$
(noise too high for IQP), or (any $p$, $\tau \neq \texttt{classification}$
but $D \leq n$).

\textbf{Recommendation:} Angle encoding (standard or dense).
Pre-scale features to $[0, \pi]$.  Use standard angle for $D = n$;
use dense angle if the qubit budget is tight ($D' \leq 2n$
after \PCA).  Downstream variational circuit depth: limit to
$d_{\text{VQC}} \leq 6 \cdot \lceil\log_2 n\rceil$ to stay
within the noise-barren-plateau limit of \cref{eq:noise-barren}.

\subsubsection*{Regime 3 — Qubit-constrained, generative task}

\textbf{Conditions:} $D > n$, $\tau = \texttt{generative}$.

\textbf{Recommendation:} Reduce $D \to D' = 2n$ via \PCA\
(maximising variance retention within the qubit budget),
then apply dense angle encoding.
Generative tasks benefit from encoding that spans the full
Bloch-sphere parameter space per qubit; dense angle encoding
achieves this with depth 2.
Amplitude encoding is not recommended here despite its
qubit efficiency, because QGANs require many encoding passes
during training, and the accumulated noise from $\calO(D)$
depth encoding circuits is prohibitive.

\subsubsection*{Regime 4 — Extreme compression: amplitude encoding with mitigation}

\textbf{Conditions:} $D \gg n$ (specifically $\log_2 D \leq n$),
$p \leq p^*_{\text{amp}}(D)$, $\tau = \texttt{regression}$.

\textbf{Recommendation:} Amplitude encoding with \ZNE\ applied
to the state-preparation circuit.
This regime is currently accessible only for $D \leq 16$
and $n \leq 4$ on best-available hardware ($p \approx 10^{-3}$).
Mandatory feasibility check: verify $F_{\text{amp}} \geq 0.90$
using \cref{eq:feasibility} before committing to this regime.
Fallback: if the check fails, reduce $D$ via \PCA\ and
apply Regime 2.

\subsubsection*{Regime 5 — Expressivity-critical: data re-uploading}

\textbf{Conditions:} $D \leq n$, $p$ moderate
($p_{\text{angle}}^* \leq p \leq p_{\text{reu}}^*(L)$
for achievable $L$), depth budget
$d_{\text{budget}} \geq L(1 + d_W)$.

\textbf{Recommendation:} Data re-uploading with $L$ layers,
where $L$ is chosen as the largest integer satisfying the
feasibility constraint.  Use layer-by-layer
initialisation~\cite{grant2019initialization} to avoid barren
plateaus at the initial layers.
Sweke et al.'s stochastic gradient analysis~\cite{sweke2020stochastic}
justifies using mini-batch gradient updates, which are
preferable to full-batch for re-uploading circuits on hardware.

\begin{figure}[!ht]
  \centering
  \begin{tikzpicture}[scale=0.65, transform shape,
    dec/.style={diamond, draw=cbBlue, fill=cbBlue!10,
                aspect=2.9, text width=4.0cm, align=center,
                font=\small, line width=0.9pt,
                inner sep=1pt},
    leaf/.style={rectangle, draw=cbGreen, rounded corners=5pt,
                 fill=cbGreen!12, text width=3.6cm, align=center,
                 font=\small, minimum height=1.15cm,
                 line width=0.9pt},
    fallback/.style={rectangle, draw=cbOrange, rounded corners=5pt,
                     fill=cbOrange!12, text width=3.0cm,
                     align=center, font=\footnotesize,
                     minimum height=0.9cm, line width=0.8pt},
    prep/.style={rectangle, draw=gray!55, rounded corners=4pt,
                 fill=gray!7, text width=3.2cm, align=center,
                 font=\small, line width=0.7pt,
                 minimum height=0.9cm},
    ann/.style={rectangle, draw=none, fill=gray!10,
                rounded corners=2pt, font=\tiny,
                text width=2.8cm, align=left,
                inner sep=3pt},
    arr/.style={->, line width=0.75pt, >=Stealth},
    lab/.style={font=\footnotesize, fill=white, inner sep=1.5pt,
                rounded corners=2pt}]

    \node[prep] (start) at (0,0)
      {\textbf{Input:} $(D,n,p,\tau)$\\
       \footnotesize Pre-process P1 (PCA),\\P2 (scaling)};

    \node[dec, below=0.8cm of start] (d1)
      {$D > n$?};
    \draw[arr] (start) -- (d1);

    \node[prep, left=2.4cm of d1] (pca)
      {Apply PCA\\$D \;\to\; D'\!\leq\!n$};
    \draw[arr] (d1) -- node[lab]{Yes} (pca);
    \draw[arr] (pca.north) -- ++(0,0.45)
      -| node[near start, lab, above]{\footnotesize loop}
      (d1.north);

    \node[dec, below=1.2cm of d1] (d2)
      {$p \leq p^{*}_{\mathrm{amp}}$\\[1pt]
       \footnotesize\textit{and} $\tau{=}$regression?};
    \draw[arr] (d1) -- node[lab]{No} (d2);

    \node[leaf, right=2.6cm of d2] (r4)
      {\textbf{Regime~4}\\[2pt]
       Amplitude encoding\\
       $+$ \ZNE\ mitigation};
    \draw[arr] (d2) -- node[lab]{Yes} (r4);

    \node[fallback, below=0.9cm of r4] (fb4)
      {Fallback:\\PCA $\to$ Regime~2\\
       \tiny(fidelity check failed)};
    \draw[arr, dashed, cbOrange]
      (r4) -- node[lab, right]{\tiny $F<F_{\min}$} (fb4);

    \node[ann, right=0.3cm of fb4]
      {$p^{*}_{\mathrm{amp}} = 1-F_{\min}^{1/(2D)}$\\
       $\approx 2\!\times\!10^{-4}$ at $D\!=\!64$};

    \node[dec, below=1.2cm of d2] (d3)
      {$p \leq p^{*}_{\mathrm{IQP}}$\\[1pt]
       \footnotesize\textit{and} $\tau{=}$classif.?};
    \draw[arr] (d2) -- node[lab]{No} (d3);

    \node[leaf, right=2.6cm of d3] (r1)
      {\textbf{Regime~1}\\[2pt]
       IQP encoding ($k{=}2$)\\
       Quantum kernel \SVM};
    \draw[arr] (d3) -- node[lab]{Yes} (r1);

    \node[ann, right=0.3cm of r1]
      {$p^{*}_{\mathrm{IQP}} = 1-F_{\min}^{2/D^{2}}$\\
       Monitor kernel conc.\\\cite{thanasilp2024exponential}};

    \node[dec, below=1.2cm of d3] (d4)
      {$\tau = \texttt{generative}$?};
    \draw[arr] (d3) -- node[lab]{No} (d4);

    \node[leaf, right=2.6cm of d4] (r3)
      {\textbf{Regime~3}\\[2pt]
       PCA $\to D'{=}2n$\\
       Dense angle encoding};
    \draw[arr] (d4) -- node[lab]{Yes} (r3);

    \node[dec, below=1.2cm of d4] (d5)
      {Depth budget\\
       $\geq L(1{+}d_W)$?};
    \draw[arr] (d4) -- node[lab]{No} (d5);

    \node[leaf, right=2.6cm of d5] (r5)
      {\textbf{Regime~5}\\[2pt]
       Data re-uploading\\
       ($L$ layers, layer init.)};
    \draw[arr] (d5) -- node[lab]{Yes} (r5);

    \node[ann, right=0.3cm of r5]
      {$L_{\max}=\lfloor(d_{\mathrm{budget}}-1)/(1+d_W)\rfloor$\\
       Verify $F \geq F_{\min}$ via \eqref{eq:fidelity-bound}};

    \node[leaf, below=1.0cm of d5,
          draw=cbBlue, fill=cbBlue!12] (r2)
      {\textbf{Regime~2} (default)\\[2pt]
       Angle encoding\\
       (standard or dense)};
    \draw[arr] (d5) -- node[lab]{No} (r2);

    \node[anchor=north east,
          draw=gray!50, rounded corners=4pt,
          fill=white, inner sep=5pt,
          font=\footnotesize]
      at (-4.2,-13.0) {%
      \begin{tabular}{@{}ll@{}}
        \tikz\node[dec, minimum width=0.5cm,
                   minimum height=0.3cm, aspect=1.5,
                   font=\tiny]{}; & Decision node \\[4pt]
        \tikz\node[leaf, minimum width=0.5cm,
                   minimum height=0.3cm,
                   font=\tiny]{}; & Recommendation \\[4pt]
        \tikz\node[fallback, minimum width=0.5cm,
                   minimum height=0.3cm,
                   font=\tiny]{}; & Fallback path \\
      \end{tabular}};

  \end{tikzpicture}
  \caption{Annotated decision tree for encoding selection
    (Algorithm~\ref{alg:framework}).  Decision nodes (blue diamonds)
    test hardware feasibility conditions derived analytically in
    \cref{sec:costs,sec:noise}; the critical thresholds
    $p^{*}_{\mathrm{amp}}$ and $p^{*}_{\mathrm{IQP}}$ are computed
    via \cref{eq:critical-rate} for the given $(D,\,F_{\min})$.
    Leaf nodes (green rectangles) give encoding recommendations
    for each of the five regimes.  The orange dashed path is the
    fallback triggered by a failed fidelity check in Regime~4;
    annotation boxes (grey) state the relevant analytical
    expressions inline.  All inputs are defined in
    \cref{def:encoding-selection-input}.}
  \label{fig:decision-tree}
\end{figure}

\subsection{Pareto Analysis}

\Cref{fig:pareto} plots the Pareto frontier in the
two-dimensional space of expressivity
(Fourier degree, as a proxy for model capacity)
vs.\ noise robustness (state-preparation fidelity at
$p_2 = 3 \times 10^{-3}$, $D = 64$).
No encoding dominates all others simultaneously: the frontier
is monotone decreasing, confirming that expressivity and
robustness are fundamentally at odds.

The data re-uploading family traces a \emph{controllable curve}
parameterised by $L$: each additional layer adds expressivity
at a fidelity cost determined by the depth-fidelity bound
of \cref{prop:fidelity}.  For the specified hardware parameters,
the curve crosses the feasibility threshold $F_{\min} = 0.90$
at $L \approx 3$ (corresponding to a Fourier degree of $3$
per feature dimension).  This gives a concrete recommendation:
for a system with $p_2 = 3 \times 10^{-3}$, re-uploading should
not exceed $L = 3$ layers if state-preparation fidelity is to
remain above 90\%.

\begin{figure}
\begin{tikzpicture}
  \begin{axis}[
    width  = 0.92\linewidth,
    height = 7.2cm,
    xlabel = {Expressivity (max.\ Fourier degree per dimension)},
    ylabel = {Encoding-layer fidelity $F_{\mathrm{enc}}$},
    xmin   = 0,   xmax = 11,
    ymin   = 0.30, ymax = 1.07,
    xtick  = {1, 2, 3, 4, 5, 6, 9.5},
    xticklabels = {1, 2, 3, 4, 5, 6, Full ($\infty$)},
    ytick  = {0.50, 0.70, 0.90, 1.00},
    yticklabels = {0.50, 0.70, 0.90, 1.00},
    grid        = both,
    grid style  = {gray!20, line width=0.4pt},
    tick label style = {font=\small},
    label style      = {font=\small},
    legend style = {
      at={(0.03,0.03)}, anchor=south west,
      font=\footnotesize, row sep=1pt,
      draw=gray!50, fill=white, fill opacity=0.9,
      text opacity=1},
    legend cell align = left,
    clip = false,
  ]

  \fill[gray!12]
    (axis cs:0, 0.90) rectangle (axis cs:11, 1.07);
  \addplot[gray!55, dashed, line width=0.9pt, forget plot]
    coordinates {(0,0.90)(11,0.90)};
  \node[gray!75, font=\tiny, anchor=south] at
    (axis cs:5.5, 0.901)
    {$F_{\min}=0.90$ (feasibility threshold)};

  \addplot[gray!45, dashed, line width=0.6pt, forget plot]
    coordinates {
      (1,0.938)(1.5,0.938)(2,0.935)(3,0.874)
      (4,0.817)(4.5,0.825)(5,0.764)(6,0.715)(9.5,0.670)
    };

  \addplot[cbBlue, only marks, mark=*, mark size=4pt]
    coordinates {(1,0.938)};
  \addlegendentry{Angle ($g_2{=}0$)}
  \node[cbBlue, font=\footnotesize, anchor=east] at
    (axis cs:0.88, 0.938) {Angle};

  \addplot[cbSky, only marks, mark=triangle*, mark size=4pt]
    coordinates {(1.5,0.938)};
  \addlegendentry{Dense angle ($g_2{=}0$)}
  \node[cbSky, font=\footnotesize, anchor=south] at
    (axis cs:1.5, 0.945) {Dense};

  \addplot[cbOrange, only marks, mark=square*, mark size=3.5pt]
    coordinates {
      (2,0.935)(3,0.874)(4,0.817)(5,0.764)(6,0.715)
    };
  \addlegendentry{Re-upload.\ $L{=}1,\ldots,5$}

  \node[cbOrange, font=\tiny, anchor=north west] at
    (axis cs:2.08, 0.931) {$L{=}1$};
  \node[cbOrange, font=\tiny, anchor=south west] at
    (axis cs:3.08, 0.878) {$L{=}2$};
  \node[cbOrange, font=\tiny, anchor=south west] at
    (axis cs:4.08, 0.821) {$L{=}3$};
  \node[cbOrange, font=\tiny, anchor=south west] at
    (axis cs:5.08, 0.768) {$L{=}4$};
  \node[cbOrange, font=\tiny, anchor=south west] at
    (axis cs:6.08, 0.719) {$L{=}5$};

  \addplot[cbPurple, only marks, mark=diamond*, mark size=4.5pt]
    coordinates {(4.5,0.825)};
  \addlegendentry{IQP ($k{=}2$, $g_2{=}64$)}
  \node[cbPurple, font=\footnotesize, anchor=south] at
    (axis cs:4.5, 0.832) {IQP};

  \addplot[cbRed, only marks, mark=pentagon*, mark size=4pt]
    coordinates {(9.5,0.670)};
  \addlegendentry{Amplitude ($g_2{=}128$)}
  \node[cbRed, font=\footnotesize, anchor=north] at
    (axis cs:9.5, 0.662) {Amplitude};

  \end{axis}
\end{tikzpicture}
  \caption{Pareto frontier in the expressivity--robustness plane
    at $p_1=10^{-3}$, $p_2=3\times10^{-3}$, $D=64$.
    All fidelity values are computed analytically from
    \cref{prop:fidelity} as
    $F_{\mathrm{enc}}=(1-p_2)^{g_2}(1-p_1)^{g_1}$,
    using gate counts from \cref{tab:fidelity-comparison}.
    Expressivity is measured as the maximum Fourier degree per
    feature dimension (\cref{thm:spectrum}); amplitude encoding's
    ``Full'' label denotes access to the complete
    $2^n$-dimensional Hilbert space, incommensurable with the
    integer Fourier-degree scale.
    The grey band marks the feasibility region $F \geq F_{\min} = 0.90$
    (\cref{def:hardware-feasibility}).  The re-uploading family
    (orange squares) traces a continuously tunable curve that
    crosses the feasibility threshold between $L=3$ and $L=4$
    at the stated hardware parameters.}
  \label{fig:pareto}
\end{figure}

\subsection{Worked Examples}

\textbf{Example A — MNIST digit classification, IBM Heron.}
$D = 784$ (raw), $n = 10$ (encoding register),
$p_2 = 3 \times 10^{-3}$, $\tau = \texttt{classification}$.
Step 1: $D = 784 > n = 10$, apply \PCA\ to $D' = 10$.
Step 2: check $p^*_{\text{IQP}} = 1 - 0.90^{1/50} \approx 2.1 \times 10^{-3}$;
since $p = 3 \times 10^{-3} > 2.1 \times 10^{-3}$, IQP fails feasibility.
Step 3: check depth budget; for $L = 3$, $d_W = 2$:
total depth $= 3(1+2) = 9 < d_{\max}$.
Fidelity: $(1-3\times10^{-3})^{30} \approx 0.914 > 0.90$.
\textbf{Decision: Regime 5 (re-uploading, $L = 3$)}.

\textbf{Example B — Gene expression regression, IonQ Aria.}
$D = 256$, $n = 8$ (logical), $p_2 = 10^{-3}$,
$\tau = \texttt{regression}$.
$p^*_{\text{amp}}(D=256) = 1 - 0.90^{1/512} \approx 2.1 \times 10^{-4}$.
Since $p = 10^{-3} > 2.1 \times 10^{-4}$, amplitude encoding
fails.  Fallback: \PCA\ to $D' = 8$, then Regime 2 (angle).
\textbf{Decision: Regime 2 with \PCA\ pre-processing}.

\textbf{Example C — Financial distribution learning, IBM Falcon.}
$D = 8$, $n = 3$, $p_2 = 5 \times 10^{-3}$,
$\tau = \texttt{generative}$.
$p^*_{\text{amp}}(D=8) = 1 - 0.90^{1/16} \approx 6.6 \times 10^{-3}$.
Since $p = 5 \times 10^{-3} < 6.6 \times 10^{-3}$, amplitude
encoding is feasible.  However, $\tau = \texttt{generative}$
routes through Regime 3; checking: $D = 8 \leq n = 3$?
No, $D > n$, but $\log_2 8 = 3 = n$, so amplitude uses exactly
$n$ qubits.  Treat as a special case of Regime 4 with
$\tau = \texttt{generative}$: amplitude with \ZNE.
\textbf{Decision: Regime 4 (amplitude + \ZNE)},
consistent with Zoufal et al.'s approach~\cite{zoufal2019quantum}.

\section{Open Challenges and Future Directions}
\label{sec:challenges}

\subsection{Trainable and Adaptive Encodings}

Data re-uploading~\cite{perez2020data} treats the encoding circuit
as a fixed function of the data, with trainable parameters only
in the interleaved ansatz blocks.  A natural extension is to make
the encoding generators themselves trainable: replacing
$R_Y(x_k) = e^{-ix_k Y/2}$ with $R_Y(\alpha_k x_k + \beta_k)$
where $(\alpha_k, \beta_k)$ are learnable.  This \emph{meta-encoding}
approach allows the model to learn which Fourier frequencies are
most useful for the given dataset.  The parameter-shift
rule~\cite{wierichs2022general} extends to arbitrary
encoding generators, making gradient-based learning of
$(\alpha_k, \beta_k)$ straightforward in principle.

The open question is whether learning the encoding parameters
jointly with the ansatz parameters introduces a meta-level
barren plateau.  Current evidence from the expressibility
analysis of Schuld et al.~\cite{schuld2021effect} and the
universal approximation results of Goto et al.~\cite{goto2021universal}
suggests that the encoding parameter landscape is benign for
shallow circuits, but no rigorous result exists for the
joint-optimisation case.

\subsection{Hardware-Native Encoding Design}

Current encoding strategies are designed for idealised, fully
connected quantum processors.  In practice, native connectivity
constraints (e.g., Heavy-Hex graph for IBM, linear chain for
early IonQ) require SWAP-gate insertion whenever a gate acts on
non-adjacent qubits, increasing both depth and gate count.
For amplitude encoding, whose preparation circuit involves
multi-controlled Toffoli gates across distant qubits, the SWAP
overhead can multiply the gate count by a factor of
$2$--$3\times$~\cite{bravyi2024high}.

Hardware-native encoding design would construct the preparation
circuit to respect native connectivity from the outset.
Preliminary work on connectivity-aware state preparation exists
for specific topologies, but no general framework has been
developed.  The problem is equivalent to a quantum circuit
routing problem~\cite{chamberland2020topological} with the
additional constraint that the prepared state must encode
the full feature vector.

\subsection{Error Mitigation Integrated into State Preparation}

\ZNE\ and PEC have been applied extensively to variational
circuits but not systematically to encoding circuits.
The challenge is that encoding circuits are data-dependent:
every new input $\bx$ requires re-execution of the state-preparation
circuit, and the noise characterisation (required for PEC) changes
with the circuit.  A promising direction is \emph{data-aware
error mitigation}, where the noise model is parameterised as a
function of $\bx$ and learned from a calibration dataset.
This would allow PEC to be applied to encoding circuits with
overhead that scales polynomially with $D$ rather than
exponentially.

A related direction is developing
\emph{noise-robust state preparation algorithms}: circuits
that are by construction less sensitive to Pauli noise.
Recent work on symmetric derivatives~\cite{wierichs2022general}
suggests that encoding circuits with higher-order symmetry
properties may have inherently lower noise sensitivity,
although this connection has not been formally established.

\subsection{Encoding for Structured and Non-Euclidean Data}
\label{subsec:structured}

The encoding strategies reviewed above assume
$\bx \in \mathbb{R}^D$: dense, Euclidean feature vectors.
Real-world datasets are frequently structured: graphs
(molecular property prediction, social networks, knowledge bases),
sequences (protein structures, time series), and hierarchical
or set-valued data (parse trees, point clouds).  Classical machine
learning has developed architectures---graph neural networks,
transformers, and recurrent networks---that exploit structural
inductive biases, but the quantum analogues are at an early stage.

\textbf{Quantum walks as structural encodings.}
Quantum walks on graphs provide a principled route to encoding
adjacency structure.  A continuous-time quantum walk on a graph
$G=(V,E)$ with adjacency matrix $A$ evolves as
$\ket{\psi(t)} = e^{-iAt}\ket{v_0}$, producing a superposition
over graph vertices weighted by the walk amplitude.  This
constitutes a natural \emph{structural feature map}:
$\phi_Q(G) = e^{-iAt}\ket{v_0}$, where the quantum state
encodes the graph topology through the generator $A$.  The
walk can be approximated to depth $\mathcal{O}(\|A\|_{\max} t)$
using standard Hamiltonian simulation
techniques~\cite{childs2010relationship}.

\textbf{Quantum Graph Neural Networks (QGNNs).}
Verdon et al.~\cite{verdon2019qgnn} proposed the first QGNN
architecture, in which a parameterised quantum circuit is
constructed from the graph structure: nodes and edges correspond
to qubits and two-qubit gates respectively, and the circuit
depth scales with the graph diameter rather than the number of
nodes.  The graph topology is thus encoded directly into the
circuit connectivity, circumventing the need for an explicit
$\mathbb{R}^D$ feature vector.  This approach naturally
avoids the compression--depth trade-off of
\cref{prop:tradeoff} for Euclidean data, since the encoding
circuit and the variational ansatz coincide.

Recent work on \emph{equivariant quantum circuits} has introduced
encoding strategies that respect graph symmetries (e.g.,
permutation invariance)~\cite{larocca2022group, verdon2019qgnn}, which is a
necessary condition for meaningful learning on graphs with
unlabelled nodes.  These circuits achieve encoding-layer gate
counts of $\mathcal{O}(|E|)$---linear in the number of edges---at
constant depth if the graph has bounded degree, making them
well-suited to NISQ constraints.

\textbf{Open problems.}
Three encoding-specific questions for structured data remain
unresolved.  First, the \emph{kernel concentration problem} of
Thanasilp et al.~\cite{thanasilp2024exponential} has not been
analysed for graph-structured encodings: it is unknown whether
the walk-based kernel $\kappa_G(G,G') = |\langle 0|e^{iAt}
e^{-iA't}|0\rangle|^2$ concentrates exponentially as $|V|$
grows.  Second, no systematic analysis exists of the
\emph{Fourier expressivity} of walk-based encodings in the sense
of \cref{thm:spectrum}: the generator of the walk is not a
Pauli rotation, so the frequency spectrum result does not apply
directly.  Third, for molecular data, the most natural structural
encoding---mapping atomic orbital basis functions to qubit
registers---is equivalent to the first-quantisation representation
used in quantum chemistry~\cite{babbush2021focus}, whose resource
costs are substantially different from the encodings surveyed here.
A unified theory of structured quantum encoding---encompassing
walk-based, equivariant, and chemistry-motivated schemes within
the cost--expressivity--robustness framework of
\cref{fig:taxonomy-radar}---is the most underexplored direction
at the intersection of quantum hardware and learning theory.

\subsection{Towards Fault-Tolerant Encoding}

As quantum hardware approaches fault-tolerant operation
(projected for the late 2020s with surface codes and
$>10^4$ physical qubits~\cite{bravyi2024high}),
the encoding landscape will shift.  With logical error rates
$p_L \approx 10^{-10}$, amplitude encoding of
$D = 10^6$-dimensional vectors becomes, in principle,
feasible from a fidelity perspective.  The binding constraint
shifts to the number of logical operations per encoding circuit:
with $\calO(D)$ CNOTs compiled to surface-code logical gates,
the time overhead per encoding pass becomes the practical limit.

Quantifying this overhead requires understanding the logical
gate time for \CNOT\ under surface codes (estimated at
$\sim 1$--$10\,\mu$s with cycle times of $\sim 1\,\mu$s),
the parallelisability of the state-preparation circuit,
and the available classical-data throughput.  A preliminary
analysis suggests that encoding $D = 10^3$ features with
a fault-tolerant amplitude circuit would require
$\calO(10^4)$ surface-code cycles and $\calO(10^6)$ physical
qubits~\cite{babbush2021focus}---far beyond near-term hardware
but relevant for the 2030--2035 time horizon.

\subsection{Benchmarking Standards for Encoding Evaluation}

One of the most pressing methodological gaps identified in
\cref{sec:taxonomy} is the absence of standardised benchmarking
protocols for encoding evaluation.  A standardised benchmark
suite would include: (i)~a fixed set of classical datasets
at multiple scales ($D \in \{4, 16, 64, 256\}$),
(ii)~a standardised noise model (parameterised by $p$) to
be used in simulation when hardware is unavailable,
(iii)~a mandatory classical baseline (kernel \SVM\ with
the same feature dimension), and (iv)~a standardised
reporting format (encoding name, $n$, $g_2$, $d$, $p$, accuracy,
compared with classical baseline).

The Quantum Computing Benchmarking Consortium
(cf.~\cite{blume2020limitations}) has made progress on
hardware-level benchmarks but has not addressed encoding-specific
metrics.  Establishing such standards would dramatically increase
the comparability of results across research groups and hardware
platforms, and would make the kind of meta-analysis performed in
\cref{sec:taxonomy} substantially more rigorous.  We recommend
that the quantum computing community adopt the format described
above as a reporting standard for all experimental \QML\ papers,
analogous to the role that ImageNet and GLUE benchmarks play
in classical deep learning.

To operationalise this recommendation, Table 5 proposes a concrete minimal reporting schema. Any experimental QML paper using a specific encoding strategy should populate all fields; omission of a field requires explicit justification. We strongly encourage QML journals and conferences to adopt this schema as a mandatory supplementary checklist for paper submissions, akin to the reproducibility checklists pioneered by classical machine learning venues such as NeurIPS. Adopting this schema uniformly would make the kind of systematic meta-analysis conducted in Section 6 substantially more rigorous in future surveys.

\begin{table}[!ht]
  \centering
  \caption{Proposed standardised reporting schema for experimental
    \QML\ papers.  Fields marked \textbf{M} are mandatory;
    fields marked \textbf{R} are recommended.  ``Classical
    baseline'' must use at least one method from the same
    feature dimension $D'$ (not a reduced-dimension surrogate).}
  \label{tab:reporting-template}
  \small
  \begin{tabularx}{\linewidth}{@{} l c X @{}}
    \toprule
    \textbf{Field} & \textbf{Status} & \textbf{Description / Unit} \\
    \midrule
    Encoding family name   & \textbf{M} &
      One of: Basis, Angle, Dense, Re-upload, IQP, Amplitude,
      Density-matrix, or Custom (specify generator) \\
    Raw feature dim.\ $D$  & \textbf{M} &
      Before any classical pre-processing; dataset name and
      source \\
    Reduced dim.\ $D'$     & \textbf{M} &
      After PCA / projection; variance retained (\%) \\
    Qubit count $n$        & \textbf{M} &
      Physical (not logical); distinguish encoding vs.\ variational
      register \\
    Gate count $g_2$       & \textbf{M} &
      Two-qubit gates in encoding layer only (not full circuit) \\
    Circuit depth $d$      & \textbf{M} &
      Encoding layer only; distinguish serial and parallel depth \\
    Noise model            & \textbf{M} &
      Hardware (device name, date, calibration ID), noisy
      sim.\ (channel type, $p_1$, $p_2$), or noiseless \\
    Encoding fidelity $F$  & \textbf{R} &
      Measured or analytically bounded via \eqref{eq:fidelity-bound};
      report $p_2$ used \\
    Primary metric         & \textbf{M} &
      Accuracy (\%), RMSE, KL divergence, etc.; mean $\pm$ std.
      over $k$-fold CV \\
    Classical baseline     & \textbf{M} &
      Method, feature dim., metric value; state if baseline
      uses same $D'$ \\
    Shot count             & \textbf{R} &
      Total shots per circuit evaluation; budget for full
      experiment \\
    Error mitigation       & \textbf{R} &
      ZNE, PEC, MEM, or None; overhead factor if applicable \\
    \bottomrule
  \end{tabularx}
\end{table}

\subsection{Quantum Advantage Conditions for Encoding-Sensitive Tasks}

The results of Jerbi et al.~\cite{jerbi2023quantum} and Huang et
al.~\cite{huang2022quantum} establish that quantum advantage in
learning depends critically on whether the data itself has quantum
structure (i.e., it was generated by a quantum process) or is
purely classical.  For classical data, the information-theoretic
bounds of Huang et al.~\cite{huang2021information} suggest that
the advantage of quantum encodings over classical kernel methods
is at most polynomial for most practically relevant function classes.

A fundamental open question is: \emph{for which specific
encoding strategies and task distributions does a provable
quantum advantage exist?}  The IQP encoding of Havl\'{\i}\v{c}ek
et al.~\cite{havlicek2019supervised} achieves advantage only for
a purpose-built problem; no natural dataset is known for which
amplitude or angle encoding provably outperforms all classical
baselines.  Resolving this question requires combining the
Fourier analysis of \cref{subsec:fourier} with complexity-theoretic
arguments (relating the hardness of computing $\kappa_Q$ to
well-studied computational problems) and statistical learning
theory (bounding the sample complexity of learning with
$\kappa_Q$).  The work of Thanasilp et al.~\cite{thanasilp2024exponential}
and the out-of-distribution generalisation bounds of Caro
et al.~\cite{caro2023out} provide useful building blocks,
but a unified theory remains absent.

\section{Conclusion}
\label{sec:conclusion}

This survey has provided a unified, analytical treatment of
quantum feature encoding strategies from three complementary
perspectives---resource cost, expressivity, and noise
robustness---synthesising 66 primary works assembled via a
PRISMA-adapted systematic review protocol.

\textbf{Core analytical result.}
We derived the critical error rate $p^*(\calE, D, F_{\min})$
below which each encoding strategy is hardware-feasible.
For current \NISQ\ devices ($p_2 \approx 3 \times 10^{-3}$)
and $D = 64$, amplitude encoding is infeasible ($p^*_{\text{amp}}
\approx 8 \times 10^{-4} < p_{\text{current}}$), while angle
encoding and data re-uploading (with $L \leq 3$) remain viable.
This result, corroborated by the empirical evidence of LaRose
and Coyle~\cite{larose2020robust} and Balewski et al.~\cite{balewski2024quantum},
provides a principled, hardware-grounded criterion for encoding
selection that supersedes qualitative rules of thumb.

\textbf{Joint trainability framework.}
By connecting the Fourier analysis of Schuld et
al.~\cite{schuld2021effect}, the barren-plateau theory of
McClean et al.~\cite{mcclean2018barren} and Wang et al.~\cite{wang2021noise},
and the kernel-concentration results of Thanasilp et al.~\cite{thanasilp2024exponential},
we identified a fundamental expressibility dilemma: the most
expressive encodings are simultaneously the least trainable,
the most noise-sensitive, and the most prone to kernel
concentration.  Resolving this dilemma requires encoding
strategies that are both structured (to avoid concentration)
and of controllable depth (to avoid barren plateaus).
Data re-uploading with layer-by-layer initialisation
is currently the most promising candidate.

\textbf{Practical impact.}
The five-regime decision framework of \cref{sec:framework}
translates the theoretical results into operationally actionable
recommendations.  The framework handles the full range of
practical settings---from low-dimensional classification on
noisy superconducting hardware (Regime 2) to high-dimensional
regression with error mitigation (Regime 4)---and provides
concrete fallback paths when primary encodings fail
feasibility checks.

\textbf{Outlook.}
As hardware error rates approach $p_2 \approx 10^{-4}$
(projected for early fault-tolerant devices in the
2028--2032 window), the critical depth threshold for amplitude
encoding will rise from $g_2 = 2D$ at $p_2 = 8 \times 10^{-4}$
to $g_2 = 2D \cdot k$ for $k \approx 4$--$5$.
This will open a window for amplitude encoding of
$D \leq 64$--$128$ dimensional data, shifting the
optimal regime from pure angle encoding toward a
hybrid architecture in which amplitude encoding handles
global feature structure and angle encoding handles
local feature refinement.
Designing encoding strategies for this intermediate regime
is, in our assessment, the highest-priority research problem
at the intersection of quantum hardware and machine learning.

\section*{AI Disclosure}
During the preparation of this work, the authors used Claude 4.6 Sonnet for the purpose of proofreading the manuscript and improving the English grammatical structure. After using this tool, the authors reviewed and edited the content as needed and take full responsibility for the technical accuracy and the final content of the publication.

\bibliographystyle{ACM-Reference-Format}
\bibliography{references}

\end{document}